% This is file FAC2egui.tex
% v2.00, 13th July 2000
% (based on FACguide.tex v1.12)

% Copyright (C) 1999,2000 Cambridge University Press

\NeedsTeXFormat{LaTeX2e}

% The following saves the original definitions of \geq and \leq (guide only).

\documentclass{fac}
\usepackage{graphicx}
\usepackage{amssymb}
\usepackage{amsmath}
\ifprodtf \else \usepackage{latexsym}\fi

% The following macros automatically define symbols to be used in Table 1 of
% the authors' guide, using characters from the AMS symbol font MSAM.

\newcommand\black{\ensuremath{\blacktriangleright}}
\newcommand\white{\ensuremath{\vartriangleright}}

\newif\ifamsfontsloaded
\ifprodtf
  \newcommand\whbl{\white\kern-.1em--\kern-.1em\black}
  \newcommand\blwh{\black\kern-.1em--\kern-.1em\white}
  \newcommand\blbl{\black\kern-.1em--\kern-.1em\black}
  \newcommand\whwh{\white\kern-.1em--\kern-.1em\white}
  \amsfontsloadedtrue
\else
  \checkfont{msam10}
  \iffontfound
    \IfFileExists{amssymb.sty}
      {\usepackage{amssymb}\amsfontsloadedtrue
       \newcommand\whbl{\white\kern-.125em--\kern-.125em\black}%
       \newcommand\blwh{\black\kern-.125em--\kern-.125em\white}%
       \newcommand\blbl{\black\kern-.125em--\kern-.125em\black}%
       \newcommand\whwh{\white\kern-.125em--\kern-.125em\white}}
      {}
  \fi
\fi

%% Macros for the guide only %%

%% End of macros for the guide %%

\newtheorem{theorem}{Theorem}[section]

\title[Formal Model of Web Service Composition: An Actor-Based Approach to Unifying Orchestration and Choreography]
      {Formal Model of Web Service Composition: An Actor-Based Approach to Unifying Orchestration and Choreography}

\author[Yong Wang]
    {Yong Wang\\
     College of Computer Science and Technology,\\
     Beijing University of Technology, Beijing, China\\
     }

\correspond{Yong Wang, Pingleyuan 100, Chaoyang District, Beijing, China.
            e-mail: wangy@bjut.edu.cn}

\pubyear{2011}
\pagerange{\pageref{firstpage}--\pageref{lastpage}}

\begin{document}
\label{firstpage}

\makecorrespond

\maketitle

\begin{abstract}
Web Service Composition creates new composite Web Services from the collection of existing ones to be composed further and embodies the added values and potential usages of Web Services. Web Service Composition includes two aspects: Web Service orchestration denoting a workflow-like composition pattern and Web Service choreography which represents an aggregate composition pattern. There were only a few works which give orchestration and choreography a relationship. In this paper, we introduce an architecture of Web Service Composition runtime which establishes a natural relationship between orchestration and choreography through a deep analysis of the two ones. Then we use an actor-based approach to design a language called AB-WSCL to support such an architecture. To give AB-WSCL a firmly theoretic foundation, we establish the formal semantics of AB-WSCL based on concurrent rewriting theory for actors. Conclusions that well defined relationships exist among the components of AB-WSCL using a notation of Compositionality is drawn based on semantics analysis. Our works can be bases of a modeling language, simulation tools, verification tools of Web Service Composition at design time, and also a Web Service Composition runtime with correctness analysis support itself.
\end{abstract}

\begin{keywords}
Web Services; Web Service Orchestration; Web Service Choreography; Actor Systems; Compositionality; Rewriting Semantics; Interaction Semantics
\end{keywords}

\section{Introduction}

Web Service (WS) is a quite new distributed software component which emerged about ten years ago to utilize the most widely-used Internet application protocol--HTTP as its base transport protocol. As a component, a WS has the similar ingredients as other ones, such as DCOM, EJB, CORBA, and so on. That is, a WS uses HTTP-based SOAP\cite{SOAP} as its transport protocol, WSDL\cite{WSDL} as its interface description language and UDDI\cite{UDDI} as its name and directory service.

WS Composition creates new composite WSs using different composition patterns from the collection of existing WSs. Because of advantages of WS to solve cross-organizational application integrations, two composition patterns\cite{WSOandWSC1} are dominant. One is called Web Service Orchestration (WSO), which uses a workflow-like composition pattern to orchestrate business activities (implemented as WS Operations) and models a cross-organizational business processes or other kind of processes. The other is called Web Service Choreography (WSC) which has an aggregate composition pattern to capture the external interaction behaviors of WSs and acts as a contract or a protocol among WSs.

We now take a simple example of buying books from a book store to illustrate some concepts of WSComposition. Though this example is quite simple and only includes the sequence control flow (that is, each business activity in a business process is executed in sequence), it is enough to explain the concepts and ideas of this paper and avoids unnecessary complexity without loss of generality. We use this example throughout this paper. The requirements of this example are as Fig.\ref{Fig.5} shows.

\begin{figure}
  \centering
  %\vspace{5cm}
  \includegraphics{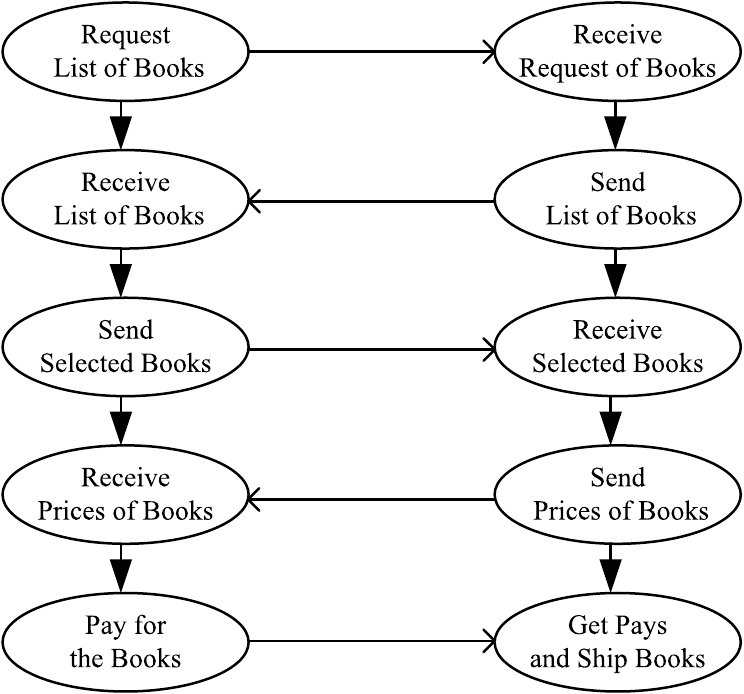}
  \caption{Requirements of an Example.}
  \label{Fig.5}
\end{figure}

A customer buys books from a book store through a user agent. In this example, we ignore interactions between the customer and the user agent, and focus on those between the user agent and the book store. Either user agent or book store has business processes to interact with each other.

We give the process of user agent as follows. The process of book store can be gotten from that of user agent as contrasts.

\begin{enumerate}
  \item The user agent requests a list of all books to the book store.
  \item It gets the book list from the book store.
  \item It selects the books by the customer and sends the list of selected books to the book store.
  \item It receives the prices of selected books from the book store.
  \item It accepts the prices and pays for the selected book to the book store. Then the process terminates.
\end{enumerate}

Since the business activities, such as the book store accepting request for a list of books from the user agent, are implemented as WSs (exactly WS operations), such buyer agent and book store business processes are called WSOs. These WSOs are published as WSs called their interface WSs for interacting among each other. The interaction behaviors among WSs described by some contracts or protocols are called WSCs.

There are many efforts for WS Composition, including its specifications, design methods and verifications, simulations, and runtime supports. Different methods and tools are used in WS Composition research, such as XML-based WSO description specifications and WSC description specifications, formal verification techniques based on Process Algebra and Petri-Net, and runtime implementations using programming languages. Some of these works mainly focus on WSO, others mainly on WSC, and also a few works attempt to establish a relationship between WSO and WSC.

Can a WS interact with another one? And also, can a WSO interact with another one via their interfaces? Is the definition of a WSC compatible with its partner WSs or partner WSOs? To solve these problems, a correct relationship between WSO and WSC must be established. A WS Composition system combining WSO and WSC, with a natural relationship between the two ones, is an attractive direction\cite{WSOandWSC1}. In such a system, not only WSO is focused like the famous WSO engine -- ActiveBPEL\cite{ActiveBPEL}, some monitoring tools and modeling tools, but also WSC is included. And in a systematic viewpoint, WS, WSO and WSC are organized with a natural relationship under the whole environment of cross-organizational business integration. More importantly, such a system should have firmly theoretic foundation with formal semantics support.

In this paper, we try to make such a system to base on actor systems theory\cite{Actor1}\cite{Actors}\cite{ConcurrentObject}. The contributions of our works are: (1)Through analysis of WS, WSO and WSC, we design a WS Composition runtime architecture with which a natural relationship among the three ones is established. (2)We introduce an actor-based language called AB-WSCL to support such an architecture. (3)More importantly, the rewriting semantics of AB-WSCL are given. Especially, through interaction semantics between WSO and WS, WS and WS, WSO and WSO, the compositionality concept is used to give finely formal relations among these system components. (4)To show the values of our works, mappings between AB-WSCL and XML-based WS specifications are illustrated. Based on the mappings, our works can be used as not only a validation tools in design time, but also a runtime system itself.

The organization of this paper is following: In section 2, related works are analyzed. We design an architecture of WS Composition runtime based on analysis of WSO, WSC and their relationship in section 3. Based on the runtime architecture, AB-WSCL is introduced in section 4. Formal semantics based on actor rewriting theory\cite{ActorRewriting}\cite{ActorRewriting1}\cite{ActorRewriting2} are given in section 5. And finally, we conclude our current works and give future directions in section 6.

\section{Related Works}

We introduce the related works about WSO, WSC, and their relations. Especially, we detail the works about formalization of WS Composition.

\subsection{WS Composition and Cross-Organizational Workflow}

Workflow\cite{WFMCRM}\cite{WFMCPD} is also a solution to cross-organizational business process integration. Workflow in such situation is called cross-organizational workflow\cite{InterorgaWorkflowPNet}\cite{InterorgaWorkfowPView1}\cite{InterorgaWorkfowPView2}. But the implementations of activities in a cross-organizational workflow are not limited to WSs.

Aalst uses Petri Net to model a workflow called WF-Net\cite{WF-Net} and connects two WF-Nets within different organizations into a newly global WF-Net to model integration of two workflows. But, a partner workflow is located within an organization, that is, the details of an inner workflow are hidden to the external world, so a global view of entire connection with two detailed inner workflows usually can not be gotten.

This leads to emergence of the so-called process view\cite{InterorgaWorkfowPView2} and integration of two inner process can be implemented based on process views\cite{InterorgaWorkfowPView1}. A process view is an observable version of an inner process from outside and serves as the interface of an inner process.

Research efforts on cross-organizational workflow give a great reference to WS Composition, especially WSO. WSOs aimed at cross-organizational business process integrations and used a workflow-like pattern to orchestrate WSs are surely some kinds cross-organizational workflows within which have activities implemented by WSs outside.

\subsection{The Industry Standardization of WS Composition}

In the viewpoint of W3C, a WS itself is an interface or a wrapper of an application inside the boundary of an organization that has a willing to interact with applications outside. That is, a W3C WS has not an independent programming model like other component models and has not needs of containing local states for local computations. Indeed, there are different sounds of developing WS to be a full sense component, Such as OGSI\cite{OGSI}. Incompatibility between W3C WS and OGSI-like WS leads to WSRF\cite{WSRF} as a compromised solution which reserves the W3C WS and develops a notion of WS Resource to model states. Since interactions among WSs are eventually driven by their inner applications at runtime, there is no any necessary to maintain states for a WS. That is, a WS just \emph{deliver}\footnote[1]{without any processing} messages, including incoming messages and outgoing messages.

A WSO orchestrates WS operations into a workflow to deal with modeling of business process. The industry tries to define a uniform WSO description language specification. The first one is WSFL\cite{WSFL} which is a flow language based on directed acyclic graph (DAG). Note that the flow model of WSFL is used to define a WSO, WSFL also provides a so-called global model to aggregate WSs. That is, global model gives a prototype to define a WSC. Another one emerged at almost the same time is XLANG\cite{XLANG} which uses a structural model and provides programming constructs to define a WSO. WSFL and XLANG are converged to WS-BPEL\cite{WS-BPEL} which adopts not only a DAG model, but also a structural model.

As a runtime mechanism, WSRF-style WS can serve as an interface of a WSO. But, a WSO, even a WS Resource, is surely different to a stateless function for share. That is, from a view of an observer outside, the interfaces\footnote[1]{exactly WSDL Operations of an interface WS of a WSO} of a WSO surely are stateful and must be invoked under the constraints of the WSO. Note that, different observers may have different views to the same WSO. Such observable views of a WSO are called abstract processes in WS-BPEL\cite{WS-BPEL} and are called WSC Interfaces in WSCI\cite{WSCI}. But these specifications do not explain their functions and even the relations to the original WSO. Though a WSRF-style WS can act as the interfaces of a WSO at runtime, but it has few uses at design time. Since observable views have more information about the inner WSO, they can be used to do more things at design time, such as generation of a WSC and generation of stub codes being more expressive than those of WSDL.

At the beginning, main efforts by the industry are done for WSO specifications though global model of WSFL can be used to define a WSC. People realize that the definition of interaction behaviors among WSs needs a separate specification. This leads to the occurrence of WSCI\cite{WSCI} and late WS-CDL\cite{WS-CDL}. Both WSCI and WS-CDL capture the interactions among WSs, including partner roles, messages exchanged definitions, linked WS operations and their executing sequence. A WSC acts as a contract or a protocol among interacting WSs. In a viewpoint of business integration requirements, a WSC serves as a business contract to constrain the rights and obligations of business partners. And from a view of WS technology, we can treat a WSC as a communication protocol which coordinates the interaction behaviors of involved WSs.

\subsection{Formalization of WS Composition}

The research works on formalization of WS Composition are almost numerous, we only introduce some representatives.

\subsubsection{Formalization for WSO and Its Interface}

Formalization for WSO includes formalization for WSO, its interface, and the relationship between WSO and its interface.

The first issue on formalization for WSO is formalizing control flows in a WSO. The control flows may be expressed in graphic model or structural model. Formalization for control flows may be based on different formal tools. By establishing the maps between control flows in a WSO and constructs in one formal model, control flows can be translated into expressions in the formal model. Then the properties, such as liveness and safety, can be verified.

The observable view of a WSO also includes control flows and inner transaction process mechanisms, so formalization for an observable view has the same issues as that of a WSO. Since the interface of a WSO includes the observable view of the WSO which is more expressive than the WSDL-expressed WS. That is, formalization for control flows and inner technical mechanisms may also be faced with.

A WSO interacts with outside via its interface, but, can the interface act in behalf of the inner WSO? That is, there must be a correct relationship between a WSO and its interface. Since observable view of a WSO is an observable version of the inner WSO, formalization for relationship between a WSO and its interface can be based on some kind of equivalences, such as trace equivalence, bisimulation equivalence, and so on. Under such concepts of equivalence, a correct relationship between a WSO and its interface can be established. That is, an interface is \emph{same} as its inner WSO modulo such equivalence. Automatic generation of an interface from an inner WSO and also refinement of an inner WSO with an interface are also contents of a formal model.

Petri Net has good traditions in formalizing WSO and can be casted back to Aalst's WF-Net\cite{WF-Net}. \cite{FormBPELPNet} tries to give BPEL a formal semantics based on Petri Net. The works cover the standard behaviors of BPEL, including atomic activities, control flows in term of structural activities, exception processing, event processing and LRBT mechanism.\cite{FormBPELPNet2} also transforms a WSO described by WS-BPEL into a so-called Service Workflow Net (SWN), which is a kind of colored Petri Net. And then it analyzes the compatibility of two services.

Process Algebra is also an influential tool in formalizing WS Composition. \cite{FormBPELPA} uses a Process Algebra CCS to model dynamic behaviors of a WS-BPEL business process. Then, by putting two processes of CCS which represent two WS-BPEL WSOs in parallel, compatibility of the two processes can be checked using theory of CCS. A variant form of Process Algebra called Pi-Calculus is also used as the similar solution to formalizing WSO\cite{FormBPELPI}.

Indeed, there are almost numerous research works on formalization for WSO and even workflow. To enumerate all the related works is unreachable. We give the representable works using different formal tools as follows. These works adopt the similar solution to giving WSO a formal foundation. \cite{FormBPELWP} analyzes BPEL using a framework composed of workflow patterns and communication patterns and makes a comparison of BPEL, WSFL, XLANG, etc. \cite{FormBPEL2} formally defines an abstract executable semantics for BPEL based on an abstract state machine (ASM). \cite{FormBPEL3} transforms all primitive and structured activities of the BPEL into a composable Timed Automata (TA) for verification. \cite{FormWSOLOTOS} transforms all control flow patterns into LOTOS for validation. \cite{FormBPELModel} translates a WSO described by BPEL into an Extended Finite-state Automaton (EFA) and uses the SPIN model checker for the representation of the EFA model for verification. \cite{FormWSOCalculus} defines a calculus called Calculus for Orchestration of Web Services (COWS) to verify a WSO.

About formalization for an interface of a WSO, especially an observable view, there are only a few works. \cite{InterorgaWorkfowPView1} and \cite{InterorgaWorkfowPView2} defines a process view for an cross-organizational workflow and discuss the usages of such a process view. But the model is almost a conceptual one and efforts still need to be done to comfort a WS Composition background. \cite{FormAbstractProcesses} points out that the abstract process in WS-BPEL does not prevent complex computations and is unnecessarily complex. They propose some restrictions on data manipulation constructs in an abstract process and also introduce a logic framework to verify such a restricted abstract process.

\subsubsection{Formalization for WSC}

Since WSC description language, such as WS-CDL, does not have formal verification support. Formalization for WSC is to give WSC a formal semantics. Such an approach is also to establish maps between entities of a WSC description language and those of a formal tools to provide formal semantics support.

\cite{FormWSCPNet1} proposes a Petri Net approach for the design and analysis of WSC Using WS-CDL\cite{WS-CDL} as the language for describing WSC and Petri Net as a formalism to simulate and validate WSC. To capture timed and prioritized interactions, the Petri Net used is a kind of prioritized version of Time Petri Net (PTPN). \cite{FormWSCPNet2} also gives WSCI\cite{WSCI} a formal semantics based on Petri Net.

\cite{FormWSCPA} presents the semantics of WS-CDL in terms of Process Algebra CSP. Therefore, all the properties of a WSC to be checked can be verified with a CSP framework. \cite{ForWSC1} formalizes WSCI based on Process Algebra CCS and can check whether two or more WSs are compatible.

There are also many works to give WSC a formal semantics. \cite{WSCLogic} presents a formal model of WS-CDL based on a Spatio-Temporal Logic which can be used to reason on properties interested. \cite{FormWSCModel} proposes a denotational semantics model for WSCI. \cite{FormWSCValidation} develops a relational calculus to simulate and validate a WSC described by WS-CDL. \cite{WSCandContract} relates theory of contracts and WSC with a notion of choreography conformance.

\subsubsection{Formalization for Relationship between WSO and WSC}

This topic may include the compatibility verification of the interacting WSs or WSOs, the conformance verification of a WSC and its partner WSs or WSOs, automatic generation with correct assurance of a WSC from existing interface definitions of WSs or WSOs, refinement with correct assurance of interface definitions of WSs or WSOs from an existing WSC.

Formalization works of WSO and WSC can be used to verify properties, such as correctness, of either WSO or WSC separately at modeling time or design time. \cite{WSOandWSCCSP} and \cite{WSOandWSCCSP2} use Process Algebra CSP as a formal basis for verifying the behavioral consistency among abstract and executable processes together with choreographic descriptions. \cite{WSOandWSC2}\cite{WSOandWSC3} transform a WSO described by WS-BPEL and a WSC defined by WS-CDL both into two process algebra, and uses the form of bisimulation relation to define a notion of conformance between the WSO and the WSC. There are also some other works to unify WSO and WSC. \cite{WSOandWSCAutomata} uses Reo coordination language and constraint automata to derive a natural correspondence relationship between WSO and WSC. And also exception handling and finalization/compensation are used to connect a WSO and a WSC\cite{WSOandWSCException}. \cite{BPEL2WSCDL} and \cite{WSCDL2BPEL} make that BPEL specifications can be converted into WS-CDL specifications, and vice versa.

This paper involves discussing a system with formal semantics support, which not only can be used for verifications and simulations at design time, including verifying properties of WSO and WSC itself, and that of the relationship between WSO and WSC, but also can serve as a WS Composition runtime system with correctness analysis support indeed. BSPL\cite{BSPL}\cite{BSPL2} is an information-driven interaction-oriented programming language and can be used to model WSC, which is somewhat similar to our approach in this paper, but this is another topic.

\section{An Architecture of WS Composition Runtime}

In this section, we firstly analyze the natures of WSO and WSC. Based on the analysis, design decisions on WS Composition runtime are made. Finally, we design the architecture of WS Composition runtime, within which has a natural relationship among WS, WSO and WSC.

\subsection{WSO and WSC}\label{WSO and WSC}

A WS is a distributed software component with transport protocol--SOAP, interface description by WSDL, and can be registered into UDDI to be searched and discovered by its customers.

A WSO orchestrates WSs existing on the Web into a process through the so-called control flow constructs. That is, within a WSO, there are a collection of atomic function units called activities with control flows to manipulate them. So, the main ingredients of a WSO are following.

\begin{itemize}
  \item Inputs and Outputs: At the start time of a WSO, it accepts some inputs. And it sends out outcomes at the end of its execution.
  \item Information and Variable Definitions: A WSO has local states which maybe transfer among activities. Finally, the local states are sent to WSs outside by activities in the form of messages. In turn, activities receiving message outside can alter the local states.
  \item Activity Definitions: An activity is an atomic unit with several pre-defined function kinds, such as invoking a WS outside, invoking an application inside, receiving a request from a customer inside/outside, local variable assignments, etc.
  \item Control Flow Definitions: Control flow definitions give activities an execution order. In terms of structural model based control flow definitions, control flows are the so-called structural activities which can be sequence activity, choice activity, loop activity, parallel activity and their variants.
  \item Binding WS Information: Added values of WS Composition are the so called recursive composition, that is, a WSO orchestrating existing WSs is published as a new WS itself too. A WSO interacts with other WSs outside through this new WS (that is detailed in \ref{in and out}).
\end{itemize}

In Fig.\ref{Fig.5}, the user agent business process is modeled as UserAgent WSO described by WS-BPEL, which is described in Appendix.\ref{XMLDescription}.

The interface WS for UserAgent WSO is called UserAgent WS described by WSDL, which also can be found in Appendix.\ref{XMLDescription}.

A WSC defines the external interaction behaviors and serves as a contract or a protocol among WSs. The main ingredients of a WSC are as following.

\begin{itemize}
  \item Parter Definitions: They defines the partners within a WSC including the role of a partner acting as and relationships among partners.
  \item Information and Variable Definitions: A WSC may also have local states exchanged among the interacting WSs.
  \item Interactions among Partners: Interaction points and interaction behaviors are defined as the core contents in a WSC.
\end{itemize}

In the buying books example, the WSC between user agent and bookstore (exactly UserAgentWS and BookStoreWS) called BuyingBookWSC is described by WS-CDL, which can be found in Appendix.\ref{XMLDescription}.

The WSO and the WSC define two different aspects of WS Composition. Their relationships as Fig.\ref{Fig.1} illustrates. Note that a WSO may require at least a WSC, but a WSC does not need to depend on a WSO.

\begin{figure}
  \centering
  %\vspace{5cm}
  \includegraphics{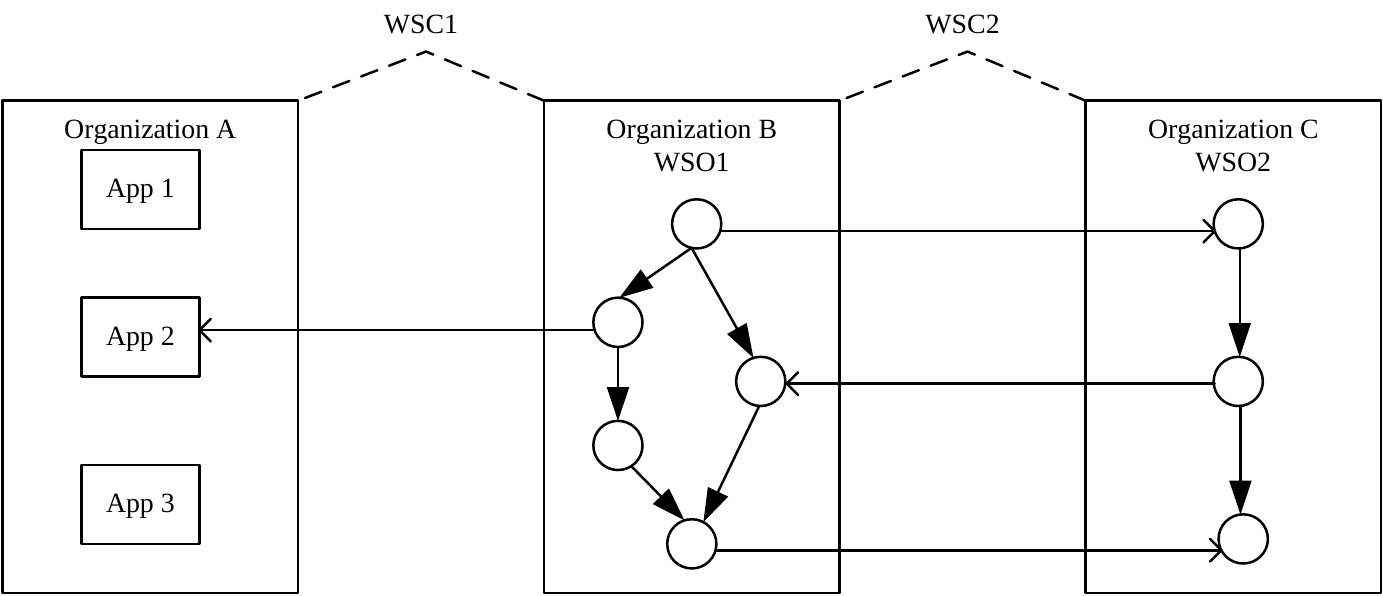}
  \caption{Relationship between WSO and WSC.}
  \label{Fig.1}
\end{figure}

\subsection{Design Decisions on Web Service Composition Runtime}

\subsubsection{Stateless WS or Stateful WS}

In the viewpoint of W3C, a WS itself is an interface or a wrapper of an application inside the boundary of an organization that has a willing to interact with applications outside. That is, a W3C WS has no an independent programming model like other component models and has no needs of containing local states for local computations. Indeed, there are different sounds of developing WS to be a full sense component, Such as OGSI\cite{OGSI}. Incompatibility between W3C WS and OGSI-like WS leads to WSRF\cite{WSRF} as a compromised solution which reserves the W3C WS and develops a notion of WS Resource to model states.

We adopt the ideas of WSRF. That is, let WS be an interface or a wrapper of WSO and let WSO be a special kind WS Resource which has local states and local computations. The interface WS of a WSO reserves ID of the WSO to deliver an incoming message to the WSO and send an outgoing message with the ID attached in order for delivering a call-back message. Further more, a WSO and its WS are one-one binding. When a new incoming message arrives without a WSO ID attached, the WS creates a new WSO and attaches its ID as a parameter. About creation of a WSO by its WS, please refer to \ref{WS Creation}.

\subsubsection{Incoming Messages and Outgoing Messages}\label{in and out}

Just as the name implies, a WS serves as a server to process an incoming message within a C/S framework. But an interaction between a component WS or a WSO requires incoming message and outgoing message pairs. When an interaction occurred, one serves as a client and the other serves as a server. But in the next interaction, the one served as client before may serve as a server and the server becomes a client.

The problem is that, when a WSO (or other kind WS Resource) inside interacts with WSs outside, who is willing to act as the bridge between the WSO inside and WSs outside? When an incoming message arrives, it is easily to be understood that the incoming message is delivered to the WSO by the interface WS. However, how is an outgoing message from a WSO inside to a component WS outside delivered?

In fact, there are two ways to solve the outgoing message. One is the way of WS-BPEL\cite{WS-BPEL}, and the other is that of an early version of WSDL\cite{WSDL}. The former uses a so-called \emph{invoke} atomic activity defined in a WSO to send an outgoing message directly without the assistant of its interface WS. In contrast, the latter specifies that every thing exchenged between resources inside and functions outside must go via the interface WS of the resource inside. Furthermore, in an early edition of WSDL, there are four kind of WS operations are defined, including an \textbf{In} operation, an \textbf{In-Out} operation, an \textbf{Out} operation and an \textbf{Out-In} operation. \textbf{In} operation and \textbf{In-Out} operation receive the incoming messages, while \textbf{Out} operation and \textbf{Out-In} operation deliver the outgoing messages. \textbf{Out} operation and \textbf{Out-In} operation are somewhat strange because a WS is a kind of server in nature. So, in the later versions of WSDL, \textbf{Out} operation and \textbf{Out-In} operation are removed. But the problem of how to process the outgoing message is remaining.

The way of WS-BPEL will cause some confusions in the WS Composition runtime architecture design (see \ref{architecture}). And the way of the early edition of WSDL looks somewhat strange. So, our way of processing outgoing message is a compromise of the above two ones. That is, the outgoing messages from an internal WSO to an external resource, must go via the WS of the internal WSO. But the WS does not need to declare operations for processing the outgoing messages in the WSDL definitions.

\subsubsection{Functions and Enablements of WSC}\label{WS Creation}

A WSC acts as a contract or a protocol between interacting WSs. In a viewpoint of business integration requirements, a WSC serves as a business contract to constrain the rights and obligations of business partners. And from a view of utilized technologies, a WSC can be deemed as a communication protocol which coordinates the interaction behaviors of involved WSs.

About the enablements of a WSC, there are also two differently enable patterns. One is a concentrated architecture and the the other is a distributed one.

The concentrated way considers that the enablements of a WSC must be under supervision of a thirdly authorized party or all involved partners. An absolutely concentrated way maybe require that any operation about interacting WSs must be done by the way of a supervisor. This way maybe cause the supervisor becoming a performance bottleneck when bulk of interactions occur, but it can bring trustworthiness of interaction results if the supervisor is trustworthy itself.

The distributed way argues that each WS interacts among others with constraints of a WSC and there is no need of a supervisor. It is regarded that WSs just behave \emph{correctly} to obey to a WSC and maybe take an example of enablements of open Internet protocols. But there are cheating business behaviors of an intendedly \emph{incorrect} WS, that are unlike almost purely technical motivations of open Internet protocols.

In this paper, we use a hybrid enablements of WSC. That is, when a WSC is contracted (either contracted dynamically at runtime or contracted with human interventions at design time) among WSs and enabled, the WSC creates the partner WSs at the beginning of enablements. And then the WSs interact with each other \emph{freely}\footnote[1]{It means that the WSs interact without the supervision of the WSC.} and \emph{correctly}\footnote[2]{It means that the interaction obeys the constraint of the WSC.}. About the cheating behaviors of a WS are out of the scope of this paper.

\subsection{A WS Composition Runtime Architecture}\label{architecture}

Based on the above introductions and discussions, we design an architecture of WS Composition runtime as Fig.\ref{Fig.2} shows. Fig.\ref{Fig.2} illustrates the typical architecture of a WS Composition runtime. We explain the compositions and their relationships in the following. There are four components: WSO, WS, WSC and applications inside.

\begin{figure}
  \centering
  %\vspace{5cm}
  \includegraphics{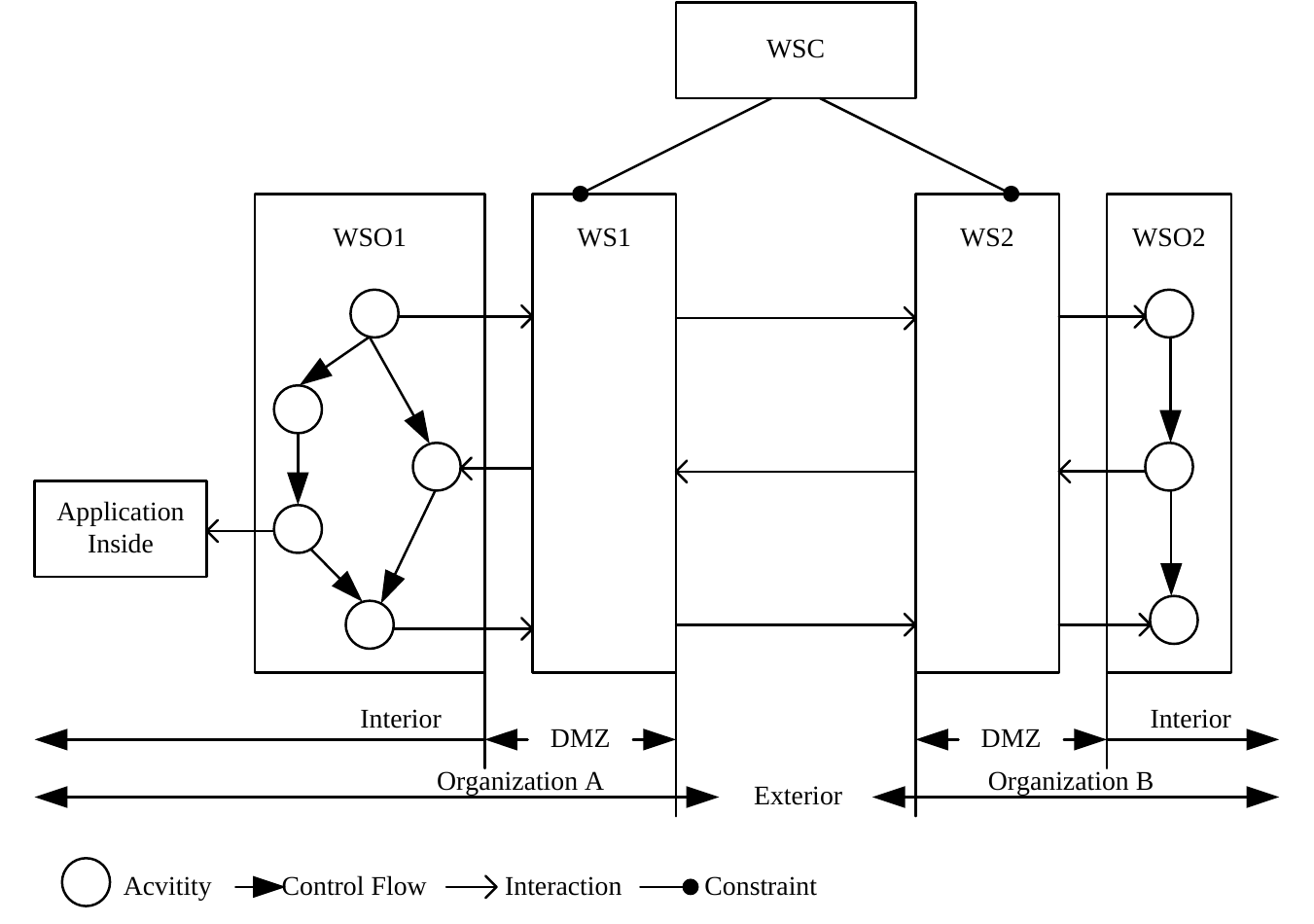}
  \caption{An Architecture of A WS Composition Runtime.}
  \label{Fig.2}
\end{figure}

The functions and ingredients of a WSO are introduced in \ref{WSO and WSC}. Usually it has a collection of activities that may interact with partner WSs outside or applications inside. Enablements of a WSO require a runtime environment which is not illustrated in Fig.\ref{Fig.2}. For examples, execution of a WSO described by WS-BPEL needs a WS-BPEL interpreter (also called WSO engine like ActiveBPEL\cite{ActiveBPEL}) and a WSO modeled in our AB-WSCL also needs an implementation (see \ref{language implementations}). A WSO locates in the interior of an organization. It interacts with applications inside with private exchanging mechanisms and with other partner WSOs outside via its interface WS.

Applications inside may be any legacy application or any newly developed application within the interior of a organization. These applications can be implemented in any technical framework and provide interfaces to interact with other applications inside, including a WSO. Interactions between a WSO and a application inside may base on any private communication mechanism, such as local object method call, RPC, RMI, etc, which depends on technical framework adopted by the application.

An interface WS acts as an interface of a WSO to interact with partner WSs outside. A WSO is with an one-to-one binding to its interface WS and is created by its interface WS at the time of first interaction with exterior. (Relationship of a WSO and its interface WS please refer to \ref{in and out}). Enablements of a WS also require a runtime support usually called SOAP engine like AXIS SOAP Engine\cite{AXIS} which implies a HTTP server installed to couple with HTTP requests. A WS and its runtime support locate at demilitarized zone (DMZ) of an organization which has different management policies and different security policies to the interior of an organization.

A WSC acts as a contract or a protocol of partner WSs (see \ref{WSO and WSC}). When a WSC is enabled, it creates all partner WSs at their accurate positions (see \ref{WS Creation}). Enablements of a WSC also require a runtime support to interpret the WSC description language like WS-CDL. A WSC and its support environment can be located at a thirdly authorized party or other places negotiated by the partners.

\section{A Language of WS Composition--Actor-Based Web Service Composition Language, AB-WSCL}

Actor\cite{Actor1}\cite{Actors}\cite{ConcurrentObject} is a basic computation model that captures the natures of concurrency in distributed computing. There are many efforts to abstract at a high-level from aspects of distributed computing, such as policy management\cite{ActorCustandCompo}, interaction policies\cite{ActorInteraction}, resource management\cite{ActorResource}, communication and coordination of agents\cite{ActorAgent}, worldwide computing\cite{ActorWWW}, etc. Our works are somewhat based on these high-level abstractions and make customizations for special requirements of WS Composition.

In this section, we introduce an actor-based language called AB-WSCL\cite{ABWSCL} to support the WS Composition runtime architecture in \ref{architecture}. Firstly, basics of an actor are introduced. Then we give the definitions of ingredients of AB-WSCL, including activity actor (AA), web service orchestration (WSO), web service (WS), and web service choreography (WSC). An example of WS Composition programmed with AB-WSCL is illustrated and some issues about AB-WSCL are discussed. Finally, we discuss about implementation of AB-WSCL based on an actor runtime called Actor Foundry\cite{ActorFoundry}.

AB-WSCL adopts a Java-like syntax and the syntax of Java is not explained any more.

\subsection{Actors}

An actor acts as an atomic function unit of concurrent computation and is able to model elements of the architecture in \ref{architecture}. We use an actor as a distributed concurrent object, as illustrated in \cite{ConcurrentObject}.

An actor is a concurrent object that encapsulates a set of states, a control thread and a set of local computations. It has a unique mail address and maintains a mail box to accept messages sent by other actors. Actors do local computations by means of processing the messages stored in the mail box sequentially and block when their mail boxes are empty.

During processing a message in mail box, an actor may perform three candidate actions: (1)(\textbf{send} action)sending messages asynchronously to other actors by their mail box addresses; (2)(\textbf{create} action) creating new actors with new behaviors; (3)(\textbf{ready} action) becoming ready to process the next message from the mail box or block if the mail box is empty. The illustration of an actor model as shows in Fig.\ref{Fig.3} which is first shown in \cite{Actors}.

\begin{figure}
  \centering
  %\vspace{5cm}
  \includegraphics{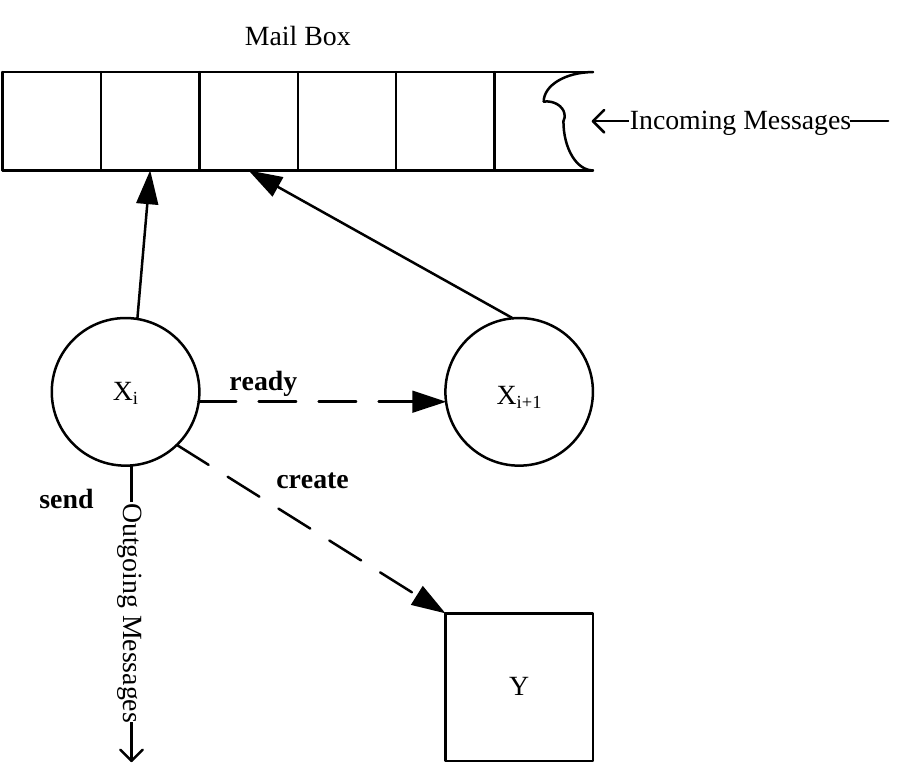}
  \caption{Model of an Actor.}
  \label{Fig.3}
\end{figure}

Note that messages are sent asynchronously among actors by default, but synchronization\cite{ActorSyn} can also be achieved among actors.

Following the action model of an actor in \cite{ActorCustandCompo}, in this paper, we also use a signal-notification pair to factor actions of an actor, as illustrated in Fig.\ref{Fig.4}.

\begin{figure}
  \centering
  %\vspace{5cm}
  \includegraphics{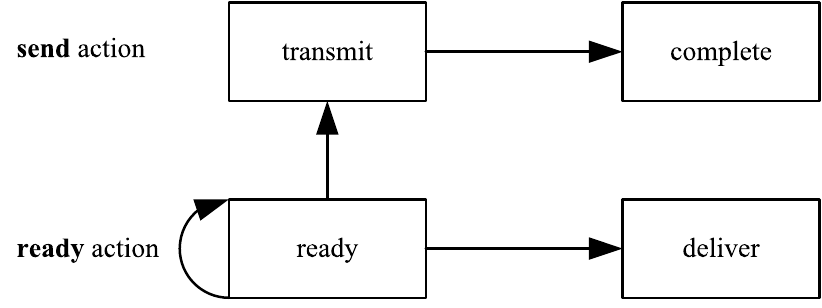}
  \caption{Model of Actor Actions.}
  \label{Fig.4}
\end{figure}

An actor sends a \textbf{transmit} signal and blocks until a \textbf{continue} notification is received. Similarly, an actor sends a \textbf{ready} signal and blocks then may be resumed by a \textbf{deliver} notification, or sending a \textbf{transmit} signal. Differently, an actor creates other actors without any event generated. This model is used to capture the interactions among actors with synchronization support in the following sections.

\subsection{Activity Actor, AA}

An activity is an atomic function unit of a WSO and is managed by the WSO. We use an actor called activity actor (AA) to model an activity.

An AA has a unique name, local information and variables to contain its states, and local computation procedures to manipulate the information and variables. An AA is always managed by a WSO and it receives messages from its WSO, sends messages to other AAs or WSs via its WSO, and is created by its WSO. Note that an AA can not create new AAs, it can only be created by a WSO. That is, an AA is an actor with a constraint that is without \textbf{create} action.

We give the abstract syntax of an AA as follows.

---------------------------------------------------------------------------------------------------

AA ::= \emph{AA} \quad name \{

  \quad\quad \emph{WSO} \quad wso-ref

  \quad\quad local-variable-declaration

  \quad\quad [init(args)\{AA-action*\}]

  \quad\quad AA-method*

\}

AA-method ::= [local] AA-method-name (args) if condition \{

  \quad\quad AA-action*

\}

AA-action ::= local-variable-assignment

\quad\quad\quad\quad\quad\quad$|$ wso-ref $\leftarrow$ WSO-method-name (args)

\quad\quad\quad\quad\quad\quad$|$ interacting with applications inside

---------------------------------------------------------------------------------------------------

Where the symbol * means that there are several AA objects with in a WSO and it means the same as the following. The first AA is the AA object and the second \emph{AA} is the key word, the expression \emph{actor-name $\leftarrow$ actor-method-name (args)} denotes sending a message \emph{args} to the actor with a name \emph{actor-name} by way of invoking the method with a name \emph{actor-method-name}.

As the above syntax described, an AA has a unique \emph{name} and a reference of its WSO (\emph{wso-ref}). Its local variables can be declared by a type system, such as XML Schema\cite{XMLSchema} or Java, and can be an actor type. It has an optional initializing method which is executed when the actor is created. It also has a set of methods which can be invoked by other actors. Each method can do some actions, including local variable assignment, interacting with applications inside, and sending to its WSO a message, through an invocation of a method of its WSO.

Note that an AA does not have \emph{create} action, Fig.\ref{Fig.4} also gives the signal-notification pairs of actions of an AA.

%\begin{figure}
%  \centering
%  %\vspace{5cm}
%  \includegraphics{AAActions.jpg}
%  \caption{Model of AA Actions.}
%  \label{Fig.5}
%\end{figure}

\subsection{Web Service Orchestration, WSO}

A WSO includes a set of AAs and acts as the manager of the AAs. The management operations may be creating a member AA, acting as a bridge between AAs and acting as a bridge between AAs and WSs outside.

We give the abstract syntax of a WSO as follows.

---------------------------------------------------------------------------------------------------

WSO ::= \emph{WSO} \quad name \{

  \quad\quad \emph{AA} \quad aa-ref*

  \quad\quad \emph{WS} \quad ws-ref

  \quad\quad local-variable-declaration

  \quad\quad [init(args)\{WSO-action*\}]

  \quad\quad WSO-method*

\}

WSO-method ::= [local] WSO-method-name (args) if condition \{

  \quad\quad WSO-action*

\}

WSO-action ::= local-variable-assignment

\quad\quad\quad\quad\quad\quad$|$ aa-ref $\leftarrow$ AA-method-name (args)

\quad\quad\quad\quad\quad\quad$|$ ws-ref $\leftarrow$ WS-method-name (args)

\quad\quad\quad\quad\quad\quad$|$ aa-ref := new \emph{AA} (args)

---------------------------------------------------------------------------------------------------

The first WSO is the WSO object and the second \emph{WSO} is the key word of the WSO object. Similar to an AA, a WSO also has local variables, an \emph{init} method, and a set of methods. Differently, A WSO has a set of AAs, and also an interface WS actor. The actions of a WSO can be local variable assignment, sending messages to AAs, sending a message to its interface WS, and also creating a new AA.

The signal-notification pairs of actions of a WSO are also as Fig.\ref{Fig.4} shows.

\subsection{Web Service, WS}

A WS is an actor that has the characteristics of an ordinary actor. It acts as a communication bridge between the inner WSO and the external partner WS and creates a new WSO when it receives a new incoming message.

The abstract syntax of a WS is following.

---------------------------------------------------------------------------------------------------

WS ::= \emph{WS} \quad name \{

  \quad\quad \emph{WSO} \quad wso-ref

  \quad\quad \emph{WS} \quad ws-ref

  \quad\quad local-variable-declaration

  \quad\quad [init(args)\{WS-action*\}]

  \quad\quad setPartner (args) if condition \{

  \quad\quad \quad\quad WS-action*

  \quad\quad \}

  \quad\quad WS-method*

\}

WS-method ::= [local] WS-method-name (args) if condition \{

  \quad\quad WS-action*

\}

WS-action ::= local-variable-assignment

\quad\quad\quad\quad\quad\quad$|$ wso-ref $\leftarrow$ WSO-method-name (args)

\quad\quad\quad\quad\quad\quad$|$ ws-ref $\leftarrow$ WS-method-name (args)

\quad\quad\quad\quad\quad\quad$|$ wso-ref := new \emph{WSO} (args)

---------------------------------------------------------------------------------------------------

The first WS is the WSO object and the second \emph{WS} is the key word of the WS object. A WS has not only local variables, an \emph{init} method, and a set of methods, but also a required \emph{setPartner} method to be invoked by a WSC to get a reference of its partner WS. Its actions include local variable assignment, sending a message to its WSO, sending a message to its partner WS and creating a new WSO.

The signal-notification pairs of actions of a WS are also as Fig.\ref{Fig.4} shows.

\subsection{Web Service Choreography, WSC}

A WSC actor creates partner WSs as some kinds roles and set each WS to the other one as their partner WSs.

The abstract syntax of a WSC is following.

---------------------------------------------------------------------------------------------------

WSC ::= \emph{WSC} \quad name [role partner-role-name*]\{

  \quad\quad \emph{WS} \quad ws-ref-1

  \quad\quad \emph{WS} \quad ws-ref-2

  \quad\quad local-variable-declaration

  \quad\quad [init(args)\{WS-action*\}]

  \quad\quad WSC-method*

\}

WSC-method ::= [local] WSC-method-name (args) if condition \{

  \quad\quad WSC-action*

\}

WSC-action ::= local-variable-assignment

\quad\quad\quad\quad\quad\quad$|$ ws-ref-1 $\leftarrow$ WS-method-name (args)

\quad\quad\quad\quad\quad\quad$|$ ws-ref-2 $\leftarrow$ WS-method-name (args)

\quad\quad\quad\quad\quad\quad$|$ ws-ref-1 := new \emph{WS} (args) as partner-role-name

\quad\quad\quad\quad\quad\quad$|$ ws-ref-2 := new \emph{WS} (args) as partner-role-name

\quad\quad\quad\quad\quad\quad$|$ ws-ref-1 $\leftarrow$ setPartner (args)

\quad\quad\quad\quad\quad\quad$|$ ws-ref-2 $\leftarrow$ setPartner (args)

---------------------------------------------------------------------------------------------------

Similarly, the first WSC is the WSC object and the second \emph{WSC} is also the key word the WSC object. With a customization to an ordinary actor, a WSC contains two interacting WSs. The actions of a WSC includes creating each WS, setting the partner to each WS, and sending other messages to each WS.

The signal-notification pairs of actions of a WSC are also as Fig.\ref{Fig.4} shows.

\subsection{Implementation for Buying Books Example by AB-WSCL}

Using the architecture in Fig.\ref{Fig.2}, we get an implementation of the buying books example as showing in Fig.\ref{Fig.6}. In this implementation, there are one WSC (named BuyingBookWSC), two WSs (one is named UserAgentWS, the other is named BookStoreWS), two WSOs (one is named UserAgentWSO, the other is named BookStoreWSO), and two set of AAs. The set of AAs belong to UserAgentWSO including RequstLBAA, ReceiveLBAA, SendSBAA, ReceivePBAA and PayBAA, and the other set of AAs belong to BookStoreWSO including ReceiveRBAA, SendLBAA, ReceiveSBAA, SendPBAA, and GetP\&ShipBAA.

\begin{figure}
  \centering
  %\vspace{5cm}
  \includegraphics{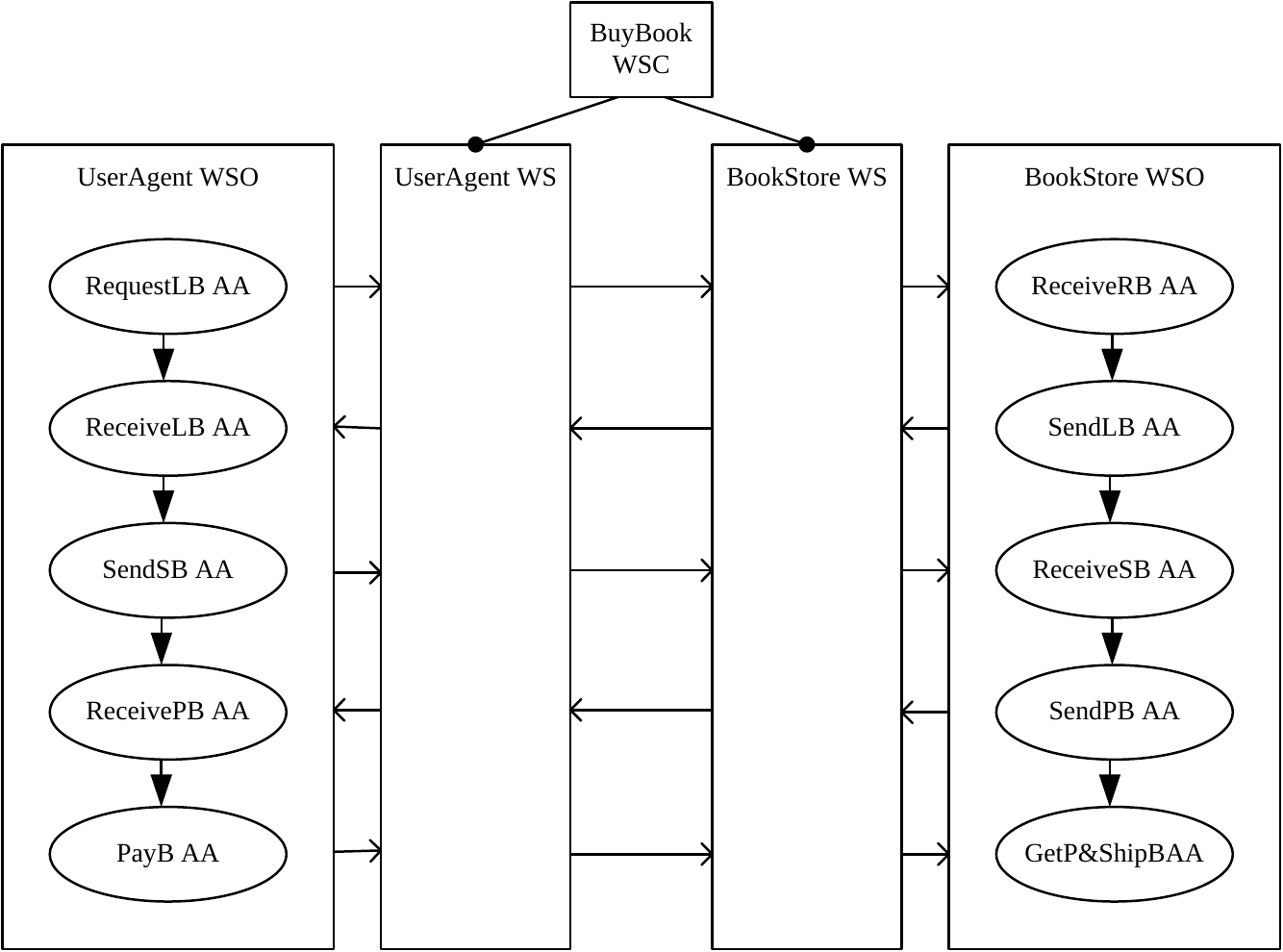}
  \caption{Implementation of the Example.}
  \label{Fig.6}
\end{figure}

The detailed implementations of entities in Fig.\ref{Fig.6} is following. We give the implementations of user agent side, and those of book store side can be gotten in contrary by the reader self. And an implementation of BuyingBookWSC is also shown.

The implementations of user agent side includes AAs (RequstLBAA, ReceiveLBAA, SendSBAA, ReceivePBAA and PayBAA), UserAgentWSO, UserAgentWS.

Implementation of RequstLBAA:

---------------------------------------------------------------------------------------------------

AA RequstLBAA\{

\quad\quad \emph{WSO} wso-ref

\quad\quad init(\emph{WSO} wso)\{

\quad\quad\quad\quad wso-ref := wso

\quad\quad\quad\quad other-local-computations

\quad\quad\}

\quad\quad requestLBFromCustomer()if true \{

\quad\quad\quad\quad other-local-computations

\quad\quad\quad\quad wso-ref $\leftarrow$ requestLB()

\quad\quad \}

\}

---------------------------------------------------------------------------------------------------

Implementation of ReceiveLBAA:

---------------------------------------------------------------------------------------------------

AA ReceiveLBAA\{

\quad\quad \emph{WSO} wso-ref

\quad\quad List books

\quad\quad init(\emph{WSO} wso)\{

\quad\quad\quad\quad wso-ref := wso

\quad\quad\quad\quad other-local-computations

\quad\quad\}

\quad\quad\ receiveLB(List bs) if true\{

\quad\quad\quad\quad books := bs

\quad\quad\quad\quad other-local-computations

\quad\quad\}

\}

---------------------------------------------------------------------------------------------------

Implementation of SendSBAA:

---------------------------------------------------------------------------------------------------

AA SendSBAA\{

\quad\quad \emph{WSO} wso-ref

\quad\quad List selectedBooks

\quad\quad init(\emph{WSO} wso)\{

\quad\quad\quad\quad wso-ref := wso

\quad\quad\quad\quad other-local-computations

\quad\quad\}

\quad\quad\ receiveSBFromCustomer(List sb) if true\{

\quad\quad\quad\quad selectedBooks := sb

\quad\quad\quad\quad other-local-computations

\quad\quad\quad\quad wso-ref $\leftarrow$ sendSB (selectedBooks)

\quad\quad\}

\}

---------------------------------------------------------------------------------------------------

Implementation of ReceivePBAA

---------------------------------------------------------------------------------------------------

AA ReceivePBAA\{

\quad\quad \emph{WSO} wso-ref

\quad\quad float prices

\quad\quad init(\emph{WSO} wso)\{

\quad\quad\quad\quad wso-ref := wso

\quad\quad\quad\quad other-local-computations

\quad\quad\}

\quad\quad\ receivePB(float p) if true\{

\quad\quad\quad\quad prices := p

\quad\quad\quad\quad other-local-computations

\quad\quad\}

\}

---------------------------------------------------------------------------------------------------

Implementation of PayBAA:

---------------------------------------------------------------------------------------------------

AA PayBAA\{

\quad\quad \emph{WSO} wso-ref

\quad\quad init(\emph{WSO} wso)\{

\quad\quad\quad\quad wso-ref := wso

\quad\quad\quad\quad other-local-computations

\quad\quad\}

\quad\quad payBFromCustomer() if true\{

\quad\quad\quad\quad other-local-computations

\quad\quad\quad\quad wso-ref $\leftarrow$ PayB()

\quad\quad \}

\}

---------------------------------------------------------------------------------------------------

Implementation of UserAgentWSO:

---------------------------------------------------------------------------------------------------

WSO UserAgentWSO\{

\quad\quad \emph{AA} requestLBAA

\quad\quad \emph{AA} receiveLBAA

\quad\quad \emph{AA} sendSBAA

\quad\quad \emph{AA} receivePBAA

\quad\quad \emph{AA} payBAA

\quad\quad \emph{WS} ws-ref

\quad\quad List books

\quad\quad List selectedBooks

\quad\quad float prices

\quad\quad init(\emph{WS} ws)\{

\quad\quad\quad\quad ws-ref := ws

\quad\quad\quad\quad requestLBAA := new RequestLBAA(self)

\quad\quad\quad\quad receiveLBAA := new ReceiveLBAA(self)

\quad\quad\quad\quad sendSBAA := new SendSBAA(self)

\quad\quad\quad\quad receivePBAA := new ReceivePBAA(self)

\quad\quad\quad\quad payBAA := new PayBAA(self)

\quad\quad\quad\quad other-local-computations

\quad\quad\}

\quad\quad requestLB() if true\{

\quad\quad\quad\quad other-local-computations

\quad\quad\quad\quad ws-ref $\leftarrow$ requestLB()

\quad\quad \}

\quad\quad receiveLB(List bs) if true\{

\quad\quad\quad\quad books := bs

\quad\quad\quad\quad other-local-computations

\quad\quad\quad\quad receiveLBAA $\leftarrow$ receiveLB(books)

\quad\quad \}

\quad\quad sendSB(sb) if true\{

\quad\quad\quad\quad selectedBooks := sb

\quad\quad\quad\quad other-local-computations

\quad\quad\quad\quad ws-ref $\leftarrow$ sendSB(selectedBooks)

\quad\quad \}

\quad\quad receivePB(float pb) if true\{

\quad\quad\quad\quad prices := pb

\quad\quad\quad\quad other-local-computations

\quad\quad\quad\quad receivePBAA $\leftarrow$ receivePB(prices)

\quad\quad \}

\quad\quad payB() if true\{

\quad\quad\quad\quad other-local-computations

\quad\quad\quad\quad ws-ref $\leftarrow$ payB()

\quad\quad \}

\}

---------------------------------------------------------------------------------------------------

Implementation of UserAgentWS:

---------------------------------------------------------------------------------------------------

WS UserAgentWS\{

\quad\quad \emph{WSO} wso-ref

\quad\quad \emph{WS} ws-ref

\quad\quad List books

\quad\quad List selectedBooks

\quad\quad float prices

\quad\quad init()\{

\quad\quad\quad\quad wso-ref := new UserAgentWSO(self)

\quad\quad\quad\quad other-local-computations

\quad\quad\}

\quad\quad setPartner(\emph{WS} ws) if true\{

\quad\quad\quad\quad ws-ref = ws

\quad\quad\quad\quad other-local-computations

\quad\quad\}

\quad\quad requestLB() if true\{

\quad\quad\quad\quad other-local-computations

\quad\quad\quad\quad ws-ref $\leftarrow$ requestLB()

\quad\quad \}

\quad\quad receiveLB(List bs) if true\{

\quad\quad\quad\quad books :=bs

\quad\quad\quad\quad other-local-computations

\quad\quad\quad\quad wso-ref $\leftarrow$ receiveLB(books)

\quad\quad \}

\quad\quad sendSB(sb) if true\{

\quad\quad\quad\quad selectedBooks := sb

\quad\quad\quad\quad other-local-computations

\quad\quad\quad\quad ws-ref $\leftarrow$ sendSB(selectedBooks)

\quad\quad \}

\quad\quad receivePB(float pb) if true\{

\quad\quad\quad\quad prices := pb

\quad\quad\quad\quad other-local-computations

\quad\quad\quad\quad wso-ref $\leftarrow$ receivePB(prices)

\quad\quad \}

\quad\quad payB() if true\{

\quad\quad\quad\quad other-local-computations

\quad\quad\quad\quad ws-ref $\leftarrow$ payB()

\quad\quad \}

\}

---------------------------------------------------------------------------------------------------

Implementation of BuyingBookWSC:

---------------------------------------------------------------------------------------------------

WSC BuyingBookWSC role user, role seller\{

\quad\quad \emph{WS} ws-ref-1

\quad\quad \emph{WS} ws-ref-2

\quad\quad init()\{

\quad\quad\quad\quad ws-ref-1 := new UserAgentWS() as user

\quad\quad\quad\quad ws-ref-2 := new BookStoreWS() as seller

\quad\quad\quad\quad ws-ref-1 $\leftarrow$ setPartner(ws-ref-2)

\quad\quad\quad\quad ws-ref-2 $\leftarrow$ setPartner(ws-ref-1)

\quad\quad\quad\quad other-local-computations

\quad\quad\}

\quad\quad\ other-WSC-methods

\}

---------------------------------------------------------------------------------------------------

\subsection{Issues of AB-WSCL}

We discuss some issues of AB-WSCL introduced in this section, including providing some kind of pre-defined AA, implementation of control flow in a WSO and lifetime of components of AB-WSCL.

\subsubsection{AA Types}

In AB-WSCL, we model an activity as an AA, which is an actor with a constraint that can not create other AAs. AAs are always under the control of a WSO, especially can only be created by its WSO. But an AA has local states, procedures for local computations and can receive messages from other actors and can also send messages to others. So, an AA can do anything except for \emph{create} action and should be able to be used to implement any kind of functions.

From a low-level technical viewpoint, AAs can have a number of fixed functions, such as sending a message to a WS outside, receiving a message from a WS outside, interacting with an application inside or a human, local variable assignments and other inner processing within a WSO. In fact, we can use the definition of AA in AB-WSCL to further pre-define some kind of AAs for convenience. Let us leave this work for future (see \ref{Future Works}).

From a view of high-level business requirements, AB-WSCL can be used to model business process integrations. In some specific business domains, such as book selling, traveling, food manufacturing, etc, it is possible to abstract some kind of business functions to be modeled as some pre-defined AA types by the assistant of domain experts. We also leave this work for future (see \ref{Future Works}).

\subsubsection{Control Flows within a WSO}

Orchestration means that all AAs are orchestrated by control flows among them. That is, the executing of AAs should be in some orders, such as in sequence, in parallel, in choice, or in loop, etc. Sequence, parallel, choice and loop are four basic constructs for an orchestration. We show the solutions of AB-WSCL for the above four constructs one by one.

Firstly, we discuss about sequence construct. AB-WSCL is a high-level concurrent computing language while an actor is a basic concurrent computation model. Actors are basic components in AB-WSCL, that is, AA, WSO, WS, WSC are all actors. As Agha\cite{Actors} pointed out, actor computing is intrinsic concurrent and sequential computing can only be implemented by causalities of messages sending among actors.

Parallel is intrinsical to AAs just because AAs are all concurrent actors.

Readers may notice that declarations of methods of actors (AA, WSO, WS, WSC) to receive messages from other actors have the form of \emph{method-name (args) if condition}. There is a \emph{condition} consisted of local states of actors to control whether or not to receive a message for an actor. Using this \emph{condition} can implement the choice construct.

Similarly, loop construct can also be implemented through the \emph{condition} of a method.

\subsubsection{Lifetime of AA, WSO, WS and WSC}

Actors in AB-WSCL, such as AA, WSO, WS, WSC, should have a lifetime under the requirements of business integration. We discuss the creation operation and termination operation of lifetime control.

As introduced above, an actor is created explicitly by a \emph{create} action of another actor. That is, an AA is created by a WSO explicitly, a WSO is created by a WS explicitly, a WS is created by a WSC explicitly, and a WSC may be created explicitly by an application upper layer.

After being created, an actor as an active distributed object runs independently like Fig.\ref{Fig.3} shows. At some exact time, such as neither receiving messages nor sending messages, an actor can be collected by a garbage collection mechanism\cite{GarbageCollection}. This mechanism can be invoked by the user explicitly or can be implemented implicitly by the actor runtime system.

\subsubsection{State Control of AA, WSO, WS and WSC}

An actor has its inner states and some block relations as shown in Fig.\ref{Fig.4}. These states of an actor are modeled from a viewpoint of a universal computation model. That is, there are no states under the requirements of business process integration. For example, an actor can respond to a human being to be suspended or be resumed again. We can customize the common actor model to extend more states and control mechanisms. Let this work be an extension of AB-WSCL to be done in future (see \ref{Future Works}).

\subsubsection{About Multi-party Choreographies}

The WSC defines interaction behaviors of involved WSs. Multi-party choreographies can be captured by a whole WSC or by a series of binary WSCs. Since the WSC actor in AB-WSCL is a binary WSC, multi-party choreoraphies can be programmed by a set of WSC actors.

For simplicity of the following formal semantics, without loss of generality, the binary WSC is adopted. In fact, the WSC actor in AB-WSCL can be easily extended to contain a set of partner WSs to directly support multi-party choreographies. The extension involves the definition of the WS actor and the WSC actor. We just give the abstract syntax of the WS actor and the WSC actor to directly support multi-party choreographies and do not mention this issue any more in the rest of this paper.

The extension of the WS actor is following.

---------------------------------------------------------------------------------------------------

WS ::= \emph{WS} \quad name \{

  \quad\quad \emph{WSO} \quad wso-ref

  \quad\quad \emph{WS} \quad ws-ref* //a set of partner WSs

  \quad\quad local-variable-declaration

  \quad\quad [init(args)\{WS-action*\}]

  \quad\quad setPartner (args) if condition \{

  \quad\quad \quad\quad WS-action*

  \quad\quad \}

  \quad\quad WS-method*

\}

WS-method ::= [local] WS-method-name (args) if condition \{

  \quad\quad WS-action*

\}

WS-action ::= local-variable-assignment

\quad\quad\quad\quad\quad\quad$|$ wso-ref $\leftarrow$ WSO-method-name (args)

\quad\quad\quad\quad\quad\quad$|$ ws-ref $\leftarrow$ WS-method-name (args)

\quad\quad\quad\quad\quad\quad$|$ wso-ref := new \emph{WSO} (args)

---------------------------------------------------------------------------------------------------

The extension of the WSC actor is following.

---------------------------------------------------------------------------------------------------

WSC ::= \emph{WSC} \quad name [role partner-role-name*]\{

  \quad\quad \emph{WS} \quad ws-ref* //a set of involving WSs

  \quad\quad local-variable-declaration

  \quad\quad [init(args)\{WS-action*\}]

  \quad\quad WSC-method*

\}

WSC-method ::= [local] WSC-method-name (args) if condition \{

  \quad\quad WSC-action*

\}

WSC-action ::= local-variable-assignment

\quad\quad\quad\quad\quad\quad$|$ ws-ref $\leftarrow$ WS-method-name (args)

\quad\quad\quad\quad\quad\quad$|$ ws-ref := new \emph{WS} (args) as partner-role-name

\quad\quad\quad\quad\quad\quad$|$ ws-ref $\leftarrow$ setPartner (args)

---------------------------------------------------------------------------------------------------

\subsection{About Implementation of AB-WSCL}\label{language implementations}

Actors in AB-WSCL are all customizations of common actors and can be implemented based on a common actor runtime called Actor Foundry\cite{ActorFoundry}. Actor Foundry is developed by Open Systems Laboratory (OSL) of University of Illinois at Urbana-Champaign as an open source software.

Actor Foundry is not only a common actor runtime, but also an actor development framework that allows further developments and customizations. As follows, we discuss the implementations of actors in AB-WSCL.

Because an AA is a common actor without \emph{create} action, and it can be implemented as an actor class inheriting from the abstract class \emph{osl.manager.ActorImpl} without \emph{create} method (That is, let creating actors method be void).

A WSO and a WSC are all basic actors with specific actions and use \emph{osl.manager.basic.BasicActorImpl} class as their implementation bases.

A WS is also a basic actor, but it adopts SOAP as its transport protocol. This requires a SOAP transport package like osl.transport.soap must be implemented firstly. Then a WS can be implemented by inheriting \emph{osl.manager.basic.BasicActorImpl} class with a transport binding to the SOAP package.

\section{Formal Semantics for AB-WSCL}

In the above sections, we design an architecture of WS Composition runtime and introduce an actor-based language called AB-WSCL to support such an architecture. More importantly, in this section we give the formal semantics of AB-WSCL based on concurrent rewriting semantics for actors\cite{ActorRewriting}\cite{ActorRewriting1}\cite{ActorRewriting2}\cite{ActorCustandCompo}, including semantics for AA, WSO, WS and WSC, especially interaction semantics between WSO and WS, WS and WS, WSO and WSO.

In this section, we use symbol conventions that follow \cite{ActorRewriting}\cite{ActorRewriting1}\cite{ActorRewriting2}\cite{ActorCustandCompo}. The symbols are used below.

$f: D \rightarrow R$ represents a total function $f$ with domain $D$ and ranges over $R$.

$f: D \xrightarrow{\circ} R$ represents a partial function $f$ with domain $D$ and ranges over $R$.

$P_n[S]$ represents the set of all subsets $s$ of $S$ where $|s|\leq n$.

$P_\omega [S]$ represents the set of all finite subsets $s$ of $S$.

A signature of rewriting theory denotes as $(\Sigma, E)$, where $\Sigma$ represents the function symbols of the theory, and $E$ denotes the set of $\Sigma$-equations.

A $\Sigma$-algebra is the set of $V$ where $f: V^n \rightarrow V$.

$T_\Sigma$ denotes the $\Sigma$-algebra of ground $\Sigma$-terms.

$T_{\Sigma, E}$ denotes $\Sigma$-algebra of equivalence classes of ground $\Sigma$-terms modulo $E$.

$T_\Sigma(X)$ denotes the $\Sigma$-algebra of ground $\Sigma$-terms with variables in $X$.

$T_{\Sigma, E}(X)$ denotes $\Sigma$-algebra of equivalence classes of ground $\Sigma$-terms with variables in $X$ modulo $E$.

$t(u_1/x_1, ..., u_n/x_n)$ denotes a substitution of each $u_i$ for each $x_i$ with a term $t\in T_\Sigma(\{ x_1,...,x_n \})$ and terms $u_1,...,u_n$.

$[t]$ denotes the $E$-equivalence class of of $t$.

$(L, R)$ denotes the rewriting rules of a signature $(\Sigma, E)$, where $L$ is the set of labels, and $R$ is the set of 3-tuples with $R\subset L\times T_{\Sigma, E}(X) \times T_{\Sigma, E}(X)$.

$\mathbb{V}$ denotes the sort of values.

$\mathbb{A}$ denotes the sort of actor addresses with $\mathbb{A}\subset \mathbb{V}$.

$\mathbb{S}$ denotes the sort of actor states.

$\mathbb{B}$ denotes the sort of actor states with $\mathbb{B}\subset \mathbb{V}$.

$\mathbb{E}$ denotes the finite set of events.

$\mathbb{P}$ denotes the sort of actor processing states with $\mathbb{P}=\{?,!\}$, where $?$ denotes a \emph{ready} state and $!$ denotes a \emph{running} state.

$\mathbb{F}$ denotes the sort of actor fragments.

$\mathbb{C}$ denotes the sort of configurations.

$p\in\mathbb{P}$ denotes a processing state of an actor.

$a\in\mathbb{A}$ denotes an address of an actor.

$b\in\mathbb{B}$ denotes a behavior of an actor.

$s\in\mathbb{S}$ denotes a state of an actor.

$l\in\mathbb{E}$ denotes the last event generated by an actor.

$t\in\mathbb{A}$ denotes the transition map for an actor.

$\langle (s_1)_{a_1},...,(s_n)_{a_n}, a_i \lhd v_i, ..., a_j \lhd v_j \rangle$ is a typical term in operational semantics of actors, where $(s_i)_{a_i}$ denotes an actor with a state $s_i$ and an address $a_i$, and $a_i \lhd v_i$ denotes a message with destination $a_i$ and value $v_i$.

\subsection{Concurrent Rewriting Logic}

Concurrent rewriting is a formal model that allows one to reason about concurrent systems in terms of state transitions. A key strength of concurrent rewriting is its ability to represent many models of concurrency within a common framework.

A rewriting logic theory is composed of a \emph{signature} denoted as $(\Sigma, E)$, together with a set of rewrite theory rules. Where $\Sigma$ defines the function symbols of the theory and the set $E$ consists of $\Sigma$-equations. A $\Sigma$-algebra is a set $V$ where each $f\in \Sigma$ of $n$ arguments is associated with a function $f: V^n\rightarrow V$. The Symbol $T_{\Sigma}$ denotes the $\Sigma$-algebra of ground $\Sigma$-terms. The symbol $T_{\Sigma, E}$ denotes the $\Sigma$-algebra of equivalence classes of ground $\Sigma$-terms modulo the equation $E$. If $X$ denotes a countable set of variables, then the symbols $T_{\Sigma}(X)$ and $T_{\Sigma, E}(X)$ denote the $\Sigma$-algebra of $\Sigma$-terms with variables in $X$, and the $\Sigma$-algebra of equivalence classes of $\Sigma$-terms with variables in $X$ modulo the equations $E$. Given a term $t\in T_{\Sigma}(\{x_1,...,x_n\})$ and term $u_1,...,u_n$, $t(u_1/x_1,...,u_n/x_n)$ denotes the term obtained from $t$ by simultaneously substituting each $u_i$ for each $x_i$. And for a term $t$, $[t]$ denotes the $E$-equivalence class of $t$.

Given a signature $(\Sigma, E)$, the rewrite rules of a theory are represented by a pair $(L, R)$ where $L$ is a set of \emph{labels} and a rule is denoted as $r: [t] \rightarrow [t']$. Then the 4-tuple $\mathbb{R} = (\Sigma, E, L, R)$ is called a rewrite theory. The deduction rules establishing a rewrite theory logic are defined as follows.

\begin{itemize}
  \item \textbf{Reflexivity}: For each $[t]\in T_{\Sigma, E}(X)$,

        \begin{center}
            $$\frac{ }{[t]\rightarrow [t]}$$
        \end{center}
  \item \textbf{Congruence}: For each $f\in \Sigma$ of $n\in \mathbb{N}$,

        \begin{center}
            $$\frac{[t_1]\rightarrow[t'_1]...[t_n]\rightarrow[t'_n]}{f(t_1,...,t_n)\rightarrow f(t'_1,...,t'_n)}$$
        \end{center}
  \item \textbf{Replacement}: For each rule $r:[t(x_1,...,x_n)]\rightarrow [t'(x_1,...,x_n)]$,

        \begin{center}
            $$\frac{[w_1]\rightarrow [w'_1]...[w_n]\rightarrow [w'_n]}{[t(\overline{w}/\overline{x})]\rightarrow [t'(\overline{w'}/\overline{x})]}$$
        \end{center}
  \item \textbf{Transitivity}:

        \begin{center}
            $$\frac{[t_1]\rightarrow [t_2] [t_2]\rightarrow[t_3]}{[t_1]\rightarrow [t_3]}$$
        \end{center}
\end{itemize}

\subsection{Rewriting Semantics for Actors}

In \cite{ActorCustandCompo}, a rewriting semantics based on abstract actor structure (AAS)\cite{ActorRewriting}\cite{ActorRewriting1} for actors shown in Fig.\ref{Fig.4} was introduced. We introduce some notions and definitions here again as a basis of the following sections. Note that in \cite{ActorCustandCompo} and the following sections only \emph{fair}\footnote[1]{exactly observation fairness, see in \cite{ActorRewriting}} rewriting derivations are considered.

An actor is modeled a 6-tuples $(p, a, b, s, l, t)$ with attributes introduced in the above section. And the term constructors follows.

$\_(\_,\_|\sigma:[\_]\lambda:[\_]\tau:[\_]):(\mathbb{P}\times\mathbb{A}\times\mathbb{B}\times\mathbb{S}\times\mathbb{E}\times\mathbb{A})\xrightarrow{\circ}\mathbb{F}$
denotes an actor.

$\_\blacktriangleleft(\_,\_,\_):\mathbb{A}\times\mathbb{A}\times\mathbb{E}\times\mathbb{V}\rightarrow\mathbb{F}$
denotes an event message. In a event message $r\blacktriangleleft(s,e,v)$, $r$ is the destination address, $s$ is the source address, $e$ is the event and $v$ is the values of the event.

$\_:\_\lhd\_:\mathbb{A}\times\mathbb{A}\times\mathbb{V}\rightarrow\mathbb{F}$ denotes a message $a:a'\lhd v$ with values $v$ sent from an actor with address $a'$ to an actor with address $a$.

$\diamond \mathbb{F}$ denotes the empty fragment.

$\_,\_:\mathbb{F}\times\mathbb{F}\xrightarrow{\circ}\mathbb{F}$ denotes composition of fragment. It is true that $F_1, F_2\in \mathbb{F}$ if $recep(F_1)\cap recep(F_2)=\emptyset$.

$[\_]\lceil \{\_\}:\mathbb{F}\times P_\omega[\mathbb{A}]\xrightarrow{\circ}\mathbb{F}$ denotes a restriction of a fragment. $[F]\lceil \{a_1,...,a_n\}\in \mathbb{F}$ if $\{a_1,...,a_n\}\subset recep(F)$.

$\langle\_\rangle:\mathbb{F}\xrightarrow{\circ}\mathbb{C}$ denotes a configuration.

The relations to operate actor configurations are as following.

$block\subset(\mathbb{E}\times\mathbb{E})$ defines the block relation of two events.

$newEvent:(\mathbb{S}\times\mathbb{B})\xrightarrow{\circ}(\mathbb{E}\times\mathbb{V})$ defines the new event and its value generated by a running actor with specific state and behavior.

$nextState:(\mathbb{S}\times\mathbb{B}\times\mathbb{E}\times\mathbb{V})\xrightarrow{\circ}\mathbb{S}$ defines the new state of an actor.

$newInstance:(\mathbb{B}\times\mathbb{V})\xrightarrow{\circ}\mathbb{S}$ defines a new actor with initial state.

$recep:\mathbb{F}\rightarrow P_\omega[\mathbb{A}]$ defines receptionists of a fragment.

$acq:(\mathbb{S}\times\mathbb{V})\rightarrow P_\omega[\mathbb{A}]$ defines acquaintances of a state or a value.

$extern:\mathbb{F}\rightarrow P_\omega[\mathbb{A}]$ defines external actors of a fragment.

$\widehat{\_}:\mathbb{A}\xrightarrow[1-1]{}\mathbb{A}\rightarrow\mathbb{S}\cup\mathbb{V}\cup\mathbb{F}\xrightarrow[1-1]{}\mathbb{S}\cup\mathbb{V}\cup\mathbb{F}$
defines a renaming function.

$En_d\subseteq \mathbb{A}\times\mathbb{S}\times\mathbb{V}$ defines a predicate on actors and values, and $En_d(a,s,v)$ holds if an actor with an address $a$ and state $s$ is enabled for deliver a message with value $v$.

$En_{ex}\subseteq \mathbb{A}\times\mathbb{S}$ defines a predicate on actors, and $En_{ex}(a,s)$ holds if an actor with an a address $a$ and state $s$ is enabled for execution.

The axioms (called hygiene conditions in \cite{ActorCustandCompo}) about the above term constructors and relations over actor configurations will no longer be introduced, please see \cite{ActorRewriting} and \cite{ActorCustandCompo}.

We give the basic computation steps with \emph{block} relations of events in Fig.\ref{Fig.4} as follow, which was introduced in \cite{ActorCustandCompo}.

[request]

$!(a,b|\sigma:[s]\lambda:[l]\tau:[a'])\rightarrow ?(a,b|\sigma:[s]\lambda:[e]\tau:[a']), a'\blacktriangleleft(a,e,v)$

if $nextEvent(s,b)=(e,v)$.

The \textbf{request} rule transforms a running actor into a ready actor and an event pair. The $nextEvent$ function determines the next signal generate by an actor based on its behaviors and current state. The transition map of an actor determines where the new signal should be processed. After generating a signal, an actor is blocked until an appropriate notification event is received.

[compute]

$?(a,b|\sigma:[s]\lambda:[l]\tau:[a']),a\blacktriangleleft(a',e,v)\rightarrow !(a,b|\sigma:[s']\lambda:[l]\tau:[a'])$

if $(l,e)\in block$ and $newState(s,b,e,v)=s'$.

The \textbf{compute} rule allows a blocked actor to become active by processing a notification event. The $newState$ function determines the new state of an actor after it has processed a notification. This rule abstracts over the internal computation performed by an actor upon receiving a particular notification.

The rewriting semantics of the common actor with block relations of events will be omitted here, please see \cite{ActorCustandCompo}.

\subsection{A Semantics for AA and WSO}\label{AA and WSO Semantics}

Firstly, we do not concern the semantics of aspects of interactions between AAs and applications inside. There are at least two reasons: (1)They have similar behaviors with those interacting with WSs and different low level communication mechanisms, such as local object method call, RPC, DCOM, RMI, etc, that do not affect their semantics. (2)This paper focuses on WS Composition.

An AA interacts with other AAs and WSs outside via its WSO and can not create other AAs. The event sort $\mathbb{E}$ including signals and notifications in Fig.\ref{Fig.4} as follows.

$\mathbb{E}=\{\textbf{transmit}, \textbf{ready}, \textbf{complete}, \textbf{deliver}\}$

An AA is defined as a 8-tuples $(p,a,b,s,l,t,wso,ws)$. $p,a,b,s,l,t$ denotes as the above section showing, $wso$ denotes AA's WSO, and $ws$ denotes WS of the WSO. An AA has the form of $p(a,b,wso,ws|\sigma:[s]\lambda:[l]\tau:[t])$ as a term.

The semantics of an AA is defined as the following rules.

[\textbf{send-in}]
\\

$A,A',WSO\blacktriangleleft(a,\textbf{transmit},\{a',v\})\rightarrow A,A',a\blacktriangleleft(WSO,\textbf{complete},\{\}), a'\lhd v$

\quad\quad\quad\quad if $A=p_a(a,b_a,wso_a,ws_a|\sigma:[s_a]\lambda:[l_a]\tau:[t_a])$ and

\quad\quad\quad\quad $A'=p_{a'}(a',b_{a'},wso_{a'},ws_{a'}|\sigma:[s_{a'}]\lambda:[l_{a'}]\tau:[t_{a'}])$ and

\quad\quad\quad\quad $wso_a = wso_{a'}=WSO$ and $ws_a = ws_{a'}$.
\\

This \textbf{send-in} rule illustrates that only AAs within a same WSO can exchange between each other via its WSO through a \textbf{transmit}-\textbf{complete} event pair.

[\textbf{send-out}]
\\

$\langle F,A,WSO\blacktriangleleft(a,\textbf{transmit},\{a',v\})\rightarrow F,A,a\blacktriangleleft(WSO,\textbf{complete},\{\}), a'\lhd v\rangle$

\quad\quad\quad\quad if $A=p_a(a,b_a,wso_a,ws_a|\sigma:[s_a]\lambda:[l_a]\tau:[t_a])$ and

\quad\quad\quad\quad $a'\notin recep(F)$ and $wso_a=WSO$.
\\

This \textbf{send-out} rule allows an AA within a WSO to send a message to applications outside via its WSO also through a \textbf{transmit}-\textbf{complete} event pair.

[\textbf{ready}]
\\

$WSO\blacktriangleleft(a,\textbf{ready},\{\}), a:a'\lhd v \rightarrow a\blacktriangleleft(WSO,\textbf{deliver},\{v\})$.
\\

This \textbf{ready} rule illustrates that an AA in ready can be deliver a message by its WSO through a \textbf{ready}-\textbf{deliver} event pair.

[\textbf{in}]
\\

$\langle F\rangle\rightarrow \langle F, a\lhd v\rangle$

\quad\quad\quad\quad if $F=[F', A]\lceil\{R\}$ and

\quad\quad\quad\quad $A=p(a,b,wso,ws|\sigma:[s]\lambda:[l]\tau:[t])$ and $a=wso$ and $a\in recep(F)$.
\\

The \textbf{in} rule shows that only messages sent to the WSO from applications outside are allowed in.

[\textbf{out}]
\\

$\langle [F,a:a'\lhd v]\rangle\lceil\{R\}\rightarrow \langle [F]\lceil\{R\cup (acq(v)\cap recep(F))\}\rangle$

\quad\quad\quad\quad if $a\notin recep(F)$.
\\

The \textbf{out} rule allows to send a message to the exterior.

As follows, we give the \textbf{create-AA} rule of the WSO.

[\textbf{create-AA}]
\\

$A\rightarrow [A, A', WSO\blacktriangleleft(a',\textbf{ready},\{\})]\lceil\{a\}$

\quad\quad\quad\quad if $A=?(a,b_a,wso_a,ws_a|\sigma:[s_a]\lambda:[l_a]\tau:[t_a])$ and

\quad\quad\quad\quad $A'=?(a',b_{a'},wso_{a'},ws_{a'}|\sigma:[s_{a'}]\lambda:[\textbf{ready}]\tau:[WSO])$ and

\quad\quad\quad\quad $En_{ex}(a,s_a)$ holds and

\quad\quad\quad\quad $a'\notin acq(s)\cup\{a\}$

\quad\quad\quad\quad and $a=wso_a=wso_{a'}=WSO$ and $ws_a=ws_{a'}$.
\\

The \textbf{create-AA} rule allows that a WSO creates its AAs. The created AA has the same $wso$ and $ws$ attributes to those of the WSO.

\subsection{A Semantics for WS}\label{WS Semantics}

A WS is denoted as a 8-tuples $p(a,b,wso,ws|\sigma:[s]\lambda:[l]\tau:[t])$ where $wso$ denotes its WSO created and $ws$ denotes its partner WS.

The functions of a WS includes creation of WSOs and interaction with its WSO and partner WS. So, we give the \textbf{create-WSO} rule, \textbf{in} rule and \textbf{out} rule below.

[\textbf{create-WSO}]
\\

$A\rightarrow [A^*, A', a\blacktriangleleft(a',\textbf{ready},\{\})]\lceil\{a\}$

\quad\quad\quad\quad if $A=?(a,b_a,[\quad],ws_a|\sigma:[s_a]\lambda:[l_a]\tau:[t_a])$ and

\quad\quad\quad\quad $A^*=?(a,b_a,a',ws_a|\sigma:[s_a]\lambda:[l_a]\tau:[t_a])$ and

\quad\quad\quad\quad $A'=?(a',b_{a'},a',a|\sigma:[s_{a'}]\lambda:[\textbf{ready}]\tau:[a])$ and

\quad\quad\quad\quad $En_{ex}(a,s_a)$ holds and

\quad\quad\quad\quad $a'\notin acq(s)\cup\{a\}$.
\\

The \textbf{create-WSO} rule allows a WS to create its WSO and establishes their relation at the same time.

[\textbf{in}]
\\

$\langle F\rangle\rightarrow \langle F, a\lhd v\rangle$

\quad\quad\quad\quad if $F=[F', A]\lceil\{R\}$ and

\quad\quad\quad\quad $A=p(a,b,wso,ws|\sigma:[s]\lambda:[l]\tau:[t])$ and $a\in recep(F)$.
\\

The \textbf{in} rule shows that messages sent to the WS from its WSO or its partner WS are allowed in.

[\textbf{out}]
\\

$\langle [F,a:a'\lhd v]\rangle\lceil\{R\}\rightarrow \langle [F]\lceil\{R\cup (acq(v)\cap recep(F))\}\rangle$

\quad\quad\quad\quad if $a\notin recep(F)$.
\\

The \textbf{out} rule allows to send a message to the exterior.

\subsection{A Semantics for WSC}

A WSC is also denoted as a 8-tuples $p(a,b,ws1,ws2|\sigma:[s]\lambda:[l]\tau:[t])$ where $ws1$ and $ws2$ denote the partner WSs created.

The main functions of a WSC are creation of partner WSs. So, we give the \textbf{create-WSs} rule below.

[\textbf{create-WSs}]
\\

$A\rightarrow [A^*, A', A'']\lceil\{a\}$

\quad\quad\quad\quad if $A=?(a,b_a,[\quad],[\quad]|\sigma:[s_a]\lambda:[l_a]\tau:[t_a])$ and

\quad\quad\quad\quad $A^*=?(a,b_a,a',a''|\sigma:[s_a]\lambda:[l_a]\tau:[t_a])$ and

\quad\quad\quad\quad $A'=?(a',b_{a'},[\quad],a''|\sigma:[s_{a'}]\lambda:[l_{a'}]\tau:[t_{a'}])$ and

\quad\quad\quad\quad $A''=?(a'',b_{a''},[\quad],a'|\sigma:[s_{a''}]\lambda:[l_{a''}]\tau:[t_{a''}])$ and

\quad\quad\quad\quad $En_{ex}(a,s_a)$ holds and

\quad\quad\quad\quad $a'\notin acq(s)\cup acq(s_{a''})\cup\{a\}$ and $a''\notin acq(s)\cup acq(s_{a'})\cup\{a\}$.
\\

The \textbf{create-WSs} rule allows a WSC to create its partner WSs and establishes their relation at the same time.

\subsection{Interaction Semantics}

In this section, we will discuss the formal relationships between WSO and WS, WS and WS, further more, WSO and WSO. Through analysis of interaction semantics among them, we establish the relationship among them by term of \emph{compositionality}. The term compositionality in interaction semantics between WSO and WS means that the messages sent by AAs in a WSO to actors in the interface WS correspond to the messages expected by the actors in the WS, and vice versa. And compositionality in interactions semantics between WS and WS, WSO and WSO is similar to that between WSO and WS.

In \cite{ActorRewriting1}, \cite{ActorRewriting2}and \cite{ActorCustandCompo}, interaction semantics between distributed components are analyzed. We use the notions and definitions about interaction semantics in \cite{ActorRewriting1}, \cite{ActorRewriting2} and \cite{ActorCustandCompo} as a basis of our further conclusions, and we do not give explanations more.

\subsubsection{Interaction Semantics between WSO and WS}

To model a WSO configuration, we introduce an operator:

$\langle \_\rangle[\_]:\mathbb{F}\times(\mathbb{A}\times\mathbb{A})\xrightarrow{\circ}\mathbb{C}$

with $\langle A\rangle[b]$ where $recep(A)\cap \{b^*\} = \emptyset$ where $(a^*,b^*)=b$.

$A$ denotes an actor fragment, $b$ denotes the relation of a WSO and its WS pair. We use a members function defined as $members:\mathbb{F}\rightarrow P_\omega(\mathbb{A})$.

We use $g$ to denote the collection of rules in section \ref{AA and WSO Semantics}, excluding the \textbf{in} and \textbf{out} rules. The set of interaction steps $I_g$ with WS is defined as follows.

[\textbf{wso-ws-silent}]
\\

$\langle A\rangle[b]\Rightarrow\langle A'\rangle[b]$

\quad\quad\quad\quad if $A\xrightarrow{g}A'$.
\\

[\textbf{wso-ws-emit-1(a,a',e,v)}]
\\

$\langle A,a'\blacktriangleleft(a,e,v)\rangle[b]\Rightarrow\langle A\rangle[b]$

\quad\quad\quad\quad if $a'\notin members(A)$.
\\

[\textbf{wso-ws-emit-2(a,a',v)}]
\\

$\langle A,a':a\lhd v\rangle[b]\Rightarrow\langle A\rangle[b]$

\quad\quad\quad\quad if $a'\notin members(A)$.
\\

[\textbf{wso-ws-consume-1(a',a,e,v)}]
\\

$\langle A\rangle[b]\Rightarrow\langle A,a\blacktriangleleft(a',e,v)\rangle[b]$

\quad\quad\quad\quad if $a\in members(A)$ and $(recep(A),a')=b$.
\\

[\textbf{wso-ws-consume-2(a',a,v)}]
\\

$\langle A\rangle[b]\Rightarrow\langle A,a:a'\lhd v\rangle[b]$

\quad\quad\quad\quad if $a\in members(A)$ and $(recep(A),a')=b$.
\\

To model a WS configuration, we also introduce another operator:

$\langle\_\rangle[\_]:\mathbb{F}\times(\mathbb{A}\times\mathbb{A})\xrightarrow{\circ}\mathbb{C}$

with $\langle M\rangle[b]$ where $recep(M)\cap \{b^*\} = \emptyset$ where $(a^*,b^*)=b$.

$M$ denotes an actor fragment, $b$ denotes the relation of a WSO and its WS.

We use $g$ to denote the collection of rules in section \ref{WS Semantics}, excluding the \textbf{in} and \textbf{out} rules. The set of interaction steps $I_g$ with WSO is defined as follows.

[\textbf{ws-wso-silent}]
\\

$\langle M\rangle[b]\Rightarrow\langle M'\rangle[b]$

\quad\quad\quad\quad if $M\xrightarrow{g}M'$.
\\

[\textbf{ws-wso-emit-1(a,a',e,v)}]
\\

$\langle M,a'\blacktriangleleft(a,e,v)\rangle[b]\Rightarrow\langle M\rangle[b]$

\quad\quad\quad\quad if $(a,a')= b$.
\\

[\textbf{ws-wso-emit-2(a,a',v)}]
\\

$\langle M,a':a\lhd v\rangle[b]\Rightarrow\langle M\rangle[b]$

\quad\quad\quad\quad if $(a,a')= b$.
\\

[\textbf{ws-wso-consume-1(a',a,e,v)}]
\\

$\langle M\rangle[b]\Rightarrow\langle M,a\blacktriangleleft(a',e,v)\rangle[b]$

\quad\quad\quad\quad if $(a,a')= b$.
\\

[\textbf{ws-wso-consume-2(a',a,v)}]
\\

$\langle M\rangle[b]\Rightarrow\langle M,a:a'\lhd v\rangle[b]$

\quad\quad\quad\quad if $(a,a')= b$.
\\

We define the interaction sequence $\zeta(i)$ and its sequence dual $\mathcal{D}(\zeta)=\zeta'(i)$ between WSO and WS as follow:

(1) If $\zeta(i)=\textbf{wso-ws-silent}$, then $\zeta'(i)=\textbf{ws-wso-silent}$.

(2) If $\zeta(i)=\textbf{wso-ws-emit-1(a,a',e,v)}$, then $\zeta'(i)=\textbf{ws-wso-consume-1(a',a,e,v)}$.

(3) If $\zeta(i)=\textbf{wso-ws-emit-2(a,a',v)}$, then $\zeta'(i)=\textbf{ws-wso-consume-2(a',a,v)}$.

(4) If $\zeta(i)=\textbf{wso-ws-consume-1(a',a,e,v)}$, then $\zeta'(i)=\textbf{ws-wso-emit-1(a,a',e,v)}$.

(5) If $\zeta(i)=\textbf{wso-ws-consume-2(a',a,v)}$, then $\zeta'(i)=\textbf{ws-wso-emit-2(a,a',v)}$.

Then we can get the interaction semantics of $\langle A\rangle[b]$ and $\langle M\rangle[b]$ as $\mathcal{I}(\langle A\rangle[b])$ and $\mathcal{I}(\langle M\rangle[b])$. The so-called configuration images $cimage_{(\langle A\rangle[b])}(\langle M\rangle[b])$ and $cimage_{(\langle M\rangle[b])}(\langle A\rangle[b])$ can be established by use of interaction semantics $\mathcal{I}(\langle A\rangle[b])$ and $\mathcal{I}(\langle M\rangle[b])$, and also interaction sequence $\zeta(i)$ and its sequence dual $\zeta'(i)$ defined above. And then we can define the compatibility among the configuration $\langle A\rangle[b]$ and the configuration $\langle M\rangle[b]$ like that in \cite{ActorCustandCompo}.

Finally, we can draw our conclusions as follows according to the definition of compositionality.

\begin{theorem}
(\textbf{Compositionality between WSO and WS}) The partial configuration $\langle A\rangle[b]$ is \emph{composable} with the partial configuration $\langle M\rangle[b]$ just if $members(A) \cap members(M)=\emptyset$, and $\langle A\rangle[b]$ is compatible with $\langle A\rangle[b]$ (and vice versa).
\end{theorem}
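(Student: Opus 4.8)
The plan is to prove the stated biconditional by establishing its two implications separately, reusing the apparatus of interaction semantics $\mathcal{I}$, interaction sequences $\zeta$ with dual $\mathcal{D}(\zeta)=\zeta'$, and configuration images $cimage$ that were imported from \cite{ActorRewriting1}, \cite{ActorRewriting2} and \cite{ActorCustandCompo}. Throughout, recall that the fragment composition $F_1,F_2$ is only defined when $recep(F_1)\cap recep(F_2)=\emptyset$, and that for $\langle A\rangle[b]$ with $b=(a^*,b^*)$ we have $a^*=recep(A)$ while $b^*$ is the address of the partner WS (dually for $\langle M\rangle[b]$). The heart of the argument is a \emph{zipping} correspondence between the individual interaction steps $I_g$ and the internal steps of the composite configuration $\langle A,M\rangle$.

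First I would prove the \emph{if} direction: assuming $members(A)\cap members(M)=\emptyset$ together with compatibility of $\langle A\rangle[b]$ and $\langle M\rangle[b]$, I would show that the two partial configurations are composable. Disjointness of the member sets guarantees in particular $recep(A)\cap recep(M)=\emptyset$, so the composite fragment $A,M$ is well formed and $\langle A,M\rangle$ is a legal configuration. I would then take an arbitrary $\zeta\in\mathcal{I}(\langle A\rangle[b])$ and, by compatibility, obtain its dual $\mathcal{D}(\zeta)=\zeta'\in\mathcal{I}(\langle M\rangle[b])$, and build a matching closed derivation of $\langle A,M\rangle$ step by step: a $\textbf{wso-ws-silent}$ step of $A$ (resp. a $\textbf{ws-wso-silent}$ step of $M$) lifts directly to an internal step of the composite via $g$; an emit step of one side is paired with the dual consume step of the other, so that, for instance, $\textbf{wso-ws-emit-1}(a,a',e,v)$ and $\textbf{ws-wso-consume-1}(a',a,e,v)$ together neither expose nor leave an event message $a'\blacktriangleleft(a,e,v)$ in the composite but instead realise an internal handoff of that event (and symmetrically for the $\lhd$-messages through the emit-2/consume-2 pair). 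Because the duality in items (1)--(5) matches every emit of one side with a consume of the other and conversely, no message escapes unmatched, which is exactly composability.

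Next I would prove the \emph{only if} direction by contraposition on each conjunct. If $members(A)\cap members(M)\neq\emptyset$, then $\langle A,M\rangle$ fails the receptionist-disjointness side condition of the composition operator, so the composite is not even a well-formed configuration and composability fails outright. If instead the two configurations are not compatible, there is some $\zeta\in\mathcal{I}(\langle A\rangle[b])$ whose dual $\zeta'$ lies outside $\mathcal{I}(\langle M\rangle[b])$; tracing $\zeta$ inside the composite exhibits a point at which one side performs an emit (say $a'\blacktriangleleft(a,e,v)$ or $a':a\lhd v$ directed at a member of the other side) that the other side can never consume, leaving a residual message that witnesses the failure of the closed-composition requirement, so again composability fails. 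The ``(and vice versa)'' clause follows because $members$, the step relation $I_g$ and the duality $\mathcal{D}$ are all symmetric in the roles of $A$ and $M$.

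The hard part will be the zipping lemma at the centre of the \emph{if} direction: I expect the bulk of the work to be checking that the configuration images $cimage_{(\langle A\rangle[b])}(\langle M\rangle[b])$ and $cimage_{(\langle M\rangle[b])}(\langle A\rangle[b])$ really do enumerate reachable states of $\langle A,M\rangle$, that the restriction operator $\lceil$ and the receptionist/acquaintance bookkeeping (the $recep$, $acq$ and $members$ side conditions in the $\textbf{in}$, $\textbf{out}$ and emit/consume rules) are preserved under each paired step, and that the construction respects the observation-fairness restriction noted after the rewriting semantics, so that an indefinitely postponed consume cannot spuriously break a match. Once this correspondence is in place, both implications close routinely.
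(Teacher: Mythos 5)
Your proposal sets out to prove the biconditional as a substantive semantic statement, but the paper does not do this and, more importantly, cannot be proved the way you propose without a piece you never supply: an independent definition of \emph{composable}. In the framework the paper imports from \cite{ActorRewriting1}, \cite{ActorRewriting2} and \cite{ActorCustandCompo}, composability of two partial configurations \emph{is defined} as disjointness of their member sets together with mutual compatibility (compatibility itself being defined through the configuration images and the dual interaction sequences). That is why the paper's entire proof is ``See above'': the theorem is the general definition instantiated to the WSO/WS interaction steps just listed, and the only work is checking that the operator $\langle\_\rangle[\_]$, the step set $I_g$, the duality $\mathcal{D}(\zeta)=\zeta'$ and the images $cimage$ have all been supplied. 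Your proof instead treats ``composable'' as a property of the composite $\langle A,M\rangle$ (well-formedness plus every emitted message being eventually consumed) and proves equivalence with the right-hand side. Read against the paper's definition, that makes your biconditional either circular (you are proving the definition equivalent to itself) or a proof of a different, stronger statement --- an adequacy or composition theorem relating the defined notion to the behaviour of the closed composite.

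If you intend the latter, the gap is concrete: you must first define the closed composite configuration and what ``no message escapes unmatched'' means there, and then your \emph{only if} direction does not yet go through. When compatibility fails you argue that some emit ``leaves a residual message,'' but in this asynchronous model a pending message $a'\blacktriangleleft(a,e,v)$ or $a':a\lhd v$ sitting in a fragment is a perfectly legal state and contradicts nothing until you have fixed a definition of composability that forbids it; moreover you would need the converse projection property --- that every fair computation of $\langle A,M\rangle$ projects onto a pair of dual interaction sequences --- which is exactly the hard half of the zipping lemma and is asserted rather than argued. The \emph{if} direction's bookkeeping claims (disjoint $members$ implies disjoint $recep$, preservation of the $\lceil$ restriction and the $acq$ side conditions under paired steps, respect for observation fairness) are the right checklist, but they belong to the imported composition theorem of \cite{ActorCustandCompo}, not to the statement the paper is making.
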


\begin{proof}
See above.
\end{proof}

\subsubsection{Interaction Semantics between WS and WS}

To model a WS configuration, we still use the operator:

$\langle\_\rangle[\_]:\mathbb{F}\times(\mathbb{A}\times\mathbb{A})\xrightarrow{\circ}\mathbb{C}$

with $\langle M\rangle[d]$ where $recep(M)\cap \{d^*\} = \emptyset$ where $(a^*,d^*)=d$.

$M$ still denotes an actor fragment, but $d$ denotes the relation of a WS and its partner WS.

We also use $g$ to denote the collection of rules in section \ref{WS Semantics}, excluding the \textbf{in} and \textbf{out} rules. The set of interaction steps $I_g$ with WS is defined as follows.

[\textbf{ws-ws-silent}]
\\

$\langle M\rangle[d]\Rightarrow\langle M'\rangle[d]$

\quad\quad\quad\quad if $M\xrightarrow{g}M'$.
\\

[\textbf{ws-ws-emit-1(a,a',e,v)}]
\\

$\langle M,a'\blacktriangleleft(a,e,v)\rangle[d]\Rightarrow\langle M\rangle[d]$

\quad\quad\quad\quad if $(a,a')= d$.
\\

[\textbf{ws-ws-emit-2(a,a',v)}]
\\

$\langle M,a':a\lhd v\rangle[d]\Rightarrow\langle M\rangle[d]$

\quad\quad\quad\quad if $(a,a')= d$.
\\

[\textbf{ws-ws-consume-1(a',a,e,v)}]
\\

$\langle M\rangle[d]\Rightarrow\langle M,a\blacktriangleleft(a',e,v)\rangle[d]$

\quad\quad\quad\quad if $(a,a')= d$.
\\

[\textbf{ws-ws-consume-2(a',a,v)}]
\\

$\langle M\rangle[d]\Rightarrow\langle M,a:a'\lhd v\rangle[d]$

\quad\quad\quad\quad if $(a,a')= d$.
\\

We define the interaction sequence $\zeta(i)$ and its sequence dual $\zeta'(i)$ between WS and WS as follow:

(1) If $\zeta(i)=\textbf{ws-ws-silent}$, then $\zeta'(i)=\textbf{ws-ws-silent}$.

(2) If $\zeta(i)=\textbf{ws-ws-emit-1(a,a',e,v)}$, then $\zeta'(i)=\textbf{ws-ws-consume-1(a',a,e,v)}$.

(3) If $\zeta(i)=\textbf{ws-ws-emit-2(a,a',v)}$, then $\zeta'(i)=\textbf{ws-ws-consume-2(a',a,v)}$.

(4) If $\zeta(i)=\textbf{ws-ws-consume-1(a',a,e,v)}$, then $\zeta'(i)=\textbf{ws-ws-emit-1(a,a',e,v)}$.

(5) If $\zeta(i)=\textbf{ws-ws-consume-2(a',a,v)}$, then $\zeta'(i)=\textbf{ws-ws-emit-2(a,a',v)}$.

Similar to interactions between WSO and WS, we get the following conclusions.

\begin{theorem}
(\textbf{Compositionality between WS and WS}) The partial configuration $\langle M\rangle[b]$ is \emph{composable} with the partial configuration $\langle M'\rangle[b']$ just if $members(M) \cap members(M')=\emptyset$, and $\langle M\rangle[b]$ is compatible with $\langle M'\rangle[b']$ (and vice versa).
\end{theorem}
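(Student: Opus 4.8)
The plan is to replay, in the symmetric peer-to-peer setting, the argument sketched for the Compositionality between WSO and WS theorem. First I would fix the two partner relations as duals of one another: writing $b=(a,a')$ for the configuration $\langle M\rangle[b]$ and $b'=(a',a)$ for $\langle M'\rangle[b']$, so that the address each WS exposes to its partner is exactly the partner address recorded by the other. With this in place I would form the interaction semantics $\mathcal{I}(\langle M\rangle[b])$ and $\mathcal{I}(\langle M'\rangle[b'])$ as the sets of fair interaction sequences generated by the step set $I_g$ of Section \ref{WS Semantics}, where $g$ is the collection of WS rules with the \textbf{in} and \textbf{out} rules excluded, and where the generic relation $d$ of the step definitions is instantiated to $b$ on the first side and to $b'$ on the second.

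The core of the argument is the duality map $\mathcal{D}$ on interaction sequences defined just above the statement. I would show that each \textbf{ws-ws-emit-1}$(a,a',e,v)$ or \textbf{ws-ws-emit-2}$(a,a',v)$ step performed by $\langle M\rangle[b]$ is matched, through $\mathcal{D}$, by the corresponding \textbf{ws-ws-consume} step of $\langle M'\rangle[b']$, and symmetrically for the silent steps. Because every emit and consume condition in the WS-WS step set is simply $(a,a')=d$, the matching is immediate once $b$ and $b'$ are chosen dual: a message emitted by the first WS toward its partner address $a'$ is exactly a message the second WS is entitled to consume, since $a'$ is that WS's own address and $a$ is its partner. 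Using $\mathcal{I}$ together with $\mathcal{D}$ I would then build the configuration images $cimage_{(\langle M\rangle[b])}(\langle M'\rangle[b'])$ and $cimage_{(\langle M'\rangle[b'])}(\langle M\rangle[b])$ and declare the two configurations \emph{compatible} exactly when each image agrees with the partner's observable interaction behaviour, reproducing the compatibility notion of \cite{ActorCustandCompo}.

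It remains to verify the composability claim. Here I would invoke the hygiene conditions for fragment composition: the composite $M, M'$ lies in $\mathbb{F}$ precisely when $recep(M)\cap recep(M')=\emptyset$, and the disjointness hypothesis $members(M)\cap members(M')=\emptyset$ is what guarantees this, since $recep(M)\subseteq members(M)$ in each WS configuration. Combining the disjointness of members with the compatibility established above, the general compositionality result instantiated for the WS-WS step sets yields that $\langle M\rangle[b]$ is composable with $\langle M'\rangle[b']$, and the symmetry of the construction gives the converse direction.

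I expect the main obstacle to be the matching step rather than the composability bookkeeping. In the WSO-WS theorem the two sides were asymmetric -- one ranged over internal $members$ while the other was a single interface, so the dual was read off directly from the differing emit and consume conditions. In the present peer-to-peer setting both sides share the same step set, so I must take care that the two partner relations $b$ and $b'$ are genuinely dual and that no message is simultaneously claimed as an emit by both configurations; the observation-fairness assumption on the derivations is what rules out such double counting and ensures every emitted message is eventually consumed by the intended partner.
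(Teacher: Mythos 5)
Your proposal follows exactly the route the paper intends: instantiate the WS--WS interaction step set with dual partner relations, match each emit step to the corresponding consume step through the duality map $\mathcal{D}$, build the configuration images and the compatibility notion imported from the cited actor-composition framework, and use disjointness of $members$ to license the fragment composition. The paper itself offers no more than the remark ``Similar to interactions between WSO and WS, we get the following conclusions'' (with the WSO--WS proof being ``See above''), so your reconstruction is essentially the same approach, spelled out in more detail than the paper provides.
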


\subsubsection{Interaction Semantics between WSO and WSO}

Though a WSO does interact with another WSO directly, in fact there are virtual interactions between WSO and WSO omitting the middle interactions -- WSO and WS, WS and WS, WS and WSO.

To model a WSO configuration, we introduce a new operator:

$\langle \_\rangle (\_):\mathbb{F}\times \mathbb{A}\xrightarrow{\circ}\mathbb{C}$

with $\langle A\rangle(w)$ where $recep(A)\cap \{w\}=\emptyset$.

$A$ denotes an actor fragment, and $w$ denotes the external actor of the configuration, such as a WS. We also use a members function defined as $members:\mathbb{F}\rightarrow P_\omega(\mathbb{A})$.

We use $g$ to denote the collection of rules in section \ref{AA and WSO Semantics}, excluding the \textbf{in} and \textbf{out} rules. The set of interaction steps $I_g$ is defined as follows.

[\textbf{wso-wso-silent}]
\\

$\langle A\rangle (w)\Rightarrow\langle A'\rangle (w)$

\quad\quad\quad\quad if $A\xrightarrow{g}A'$.
\\

[\textbf{wso-wso-emit-1(a,a',e,v)}]
\\

$\langle A,a'\blacktriangleleft(a,e,v)\rangle(w)\Rightarrow\langle A\rangle(w)$

\quad\quad\quad\quad if $a'\notin members(A)$.
\\

[\textbf{wso-wso-emit-2(a,a',v)}]
\\

$\langle A,a':a\lhd v\rangle(w)\Rightarrow\langle A\rangle(w)$

\quad\quad\quad\quad if $a'\notin members(A)$.
\\

[\textbf{wso-wso-consume-1(a',a,e,v)}]
\\

$\langle A\rangle(w)\Rightarrow\langle A,a\blacktriangleleft(a',e,v)\rangle(w)$

\quad\quad\quad\quad if $a\in members(A)$ and $a'=w$.
\\

[\textbf{wso-wso-consume-2(a',a,v)}]
\\

$\langle A\rangle(w)\Rightarrow\langle A,a:a'\lhd v\rangle(w)$

\quad\quad\quad\quad if $a\in members(A)$ and $a'=w$.
\\

We define the interaction sequence $\zeta(i)$ and its sequence dual $\zeta'(i)$ between WSO and WSO as follow:

(1) If $\zeta(i)=\textbf{wso-wso-silent}$, then $\zeta'(i)=\textbf{wso-wso-silent}$.

(2) If $\zeta(i)=\textbf{wso-wso-emit-1(a,a',e,v)}$, then $\zeta'(i)=\textbf{wso-wso-consume-1(a',a,e,v)}$.

(3) If $\zeta(i)=\textbf{wso-wso-emit-2(a,a',v)}$, then $\zeta'(i)=\textbf{wso-wso-consume-2(a',a,v)}$.

(4) If $\zeta(i)=\textbf{wso-wso-consume-1(a',a,e,v)}$, then $\zeta'(i)=\textbf{wso-wso-emit-1(a,a',e,v)}$.

(5) If $\zeta(i)=\textbf{wso-wso-consume-2(a',a,v)}$, then $\zeta'(i)=\textbf{wso-wso-emit-2(a,a',v)}$.

Similarly, we get the following conclusions.

\begin{theorem}
(\textbf{Compositionality between WSO and WSO}) The partial configuration $\langle A\rangle(w)$ is \emph{composable} with the partial configuration $\langle A'\rangle(w')$ just if $members(A) \cap members(A')=\emptyset$, and $\langle A\rangle(w)$ is compatible with $\langle A'\rangle(w')$ (and vice versa).
\end{theorem}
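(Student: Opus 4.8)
The plan is to follow exactly the template already established for the first theorem (Compositionality between WSO and WS), now instantiated with the WSO--WSO interaction rules. First I would fix the two partial configurations $\langle A\rangle(w)$ and $\langle A'\rangle(w')$ and record that, by the side condition on the configuration operator, $w\notin recep(A)$ and $w'\notin recep(A')$, so each external name designates the unique partner seen across the boundary. I would then read off the interaction semantics $\mathcal{I}(\langle A\rangle(w))$ and $\mathcal{I}(\langle A'\rangle(w'))$ as the sets of fair interaction sequences built from the six rules \textbf{wso-wso-silent}, \textbf{wso-wso-emit-1}, \textbf{wso-wso-emit-2}, \textbf{wso-wso-consume-1}, \textbf{wso-wso-consume-2}, observing that the \emph{silent} rule contributes exactly the internal $\xrightarrow{g}$ steps of section \ref{AA and WSO Semantics} while the emit/consume rules expose the only visible boundary traffic at $w$ and $w'$.

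Next I would use the sequence dual $\mathcal{D}(\zeta)=\zeta'$ defined just above the statement to build the configuration images $cimage_{(\langle A\rangle(w))}(\langle A'\rangle(w'))$ and $cimage_{(\langle A'\rangle(w'))}(\langle A\rangle(w))$ as in \cite{ActorCustandCompo}: each emission of one side is turned into the dual consumption demanded of the other, and conversely. Compatibility then reads as the inclusion of $cimage_{(\langle A\rangle(w))}(\langle A'\rangle(w'))$ into $\mathcal{I}(\langle A'\rangle(w'))$ together with its symmetric counterpart, i.e. every boundary event one configuration can emit is one the other is ready to consume, with matching value $v$ and matching event sort $e$. The disjointness hypothesis $members(A)\cap members(A')=\emptyset$ guarantees that the merged fragment $A,A'$ is a legal composition, since the partial constructor $\_,\_:\mathbb{F}\times\mathbb{F}\xrightarrow{\circ}\mathbb{F}$ is defined only when receptionist sets do not collide, and that no actor is internal to both sides. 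Combining disjointness with compatibility then lets me replay each \textbf{wso-wso-emit-$k$} step of one side against the dual \textbf{wso-wso-consume-$k$} step of the other, so every boundary message produced is consumed and the interface cannot deadlock; this yields the forward direction. For the converse I would argue that failure of either hypothesis blocks composition -- overlapping members leave $A,A'$ undefined, and a missing dual leaves an emitted boundary event with no reader, contradicting fair progress.

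The one genuinely new point, relative to the WSO--WS and WS--WS theorems, is the remark preceding the statement that a WSO never interacts with another WSO directly: the real path runs WSO--WS, WS--WS, WS--WSO, and the operator $\langle A\rangle(w)$ merely records the external actor $w$ while hiding that chain. The hard part will therefore be justifying that the \emph{virtual} WSO--WSO interaction defined by these six rules is faithful to the sequential composition of the three already-treated interactions. I expect to discharge this with a transitivity lemma for compatibility: given $\langle A\rangle[b_1]$ compatible with its interface WS, that WS compatible with the partner WS under WS--WS, and the partner WS compatible with $\langle A'\rangle[b_2]$ under WS--WSO, the duals compose so that the induced boundary sequences at $w$ and $w'$ become mutual images, after which collapsing the two intermediate relay/\emph{silent} steps identifies the composite trace with a single \textbf{wso-wso} interaction sequence. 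Once this faithfulness lemma is in place, the equivalence ``composable iff disjoint members and compatible'' transports verbatim from the first theorem, and the ``vice versa'' is immediate from the symmetry of $\mathcal{D}$.
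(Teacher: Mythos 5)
Your first two paragraphs reproduce exactly what the paper does: the paper's own ``proof'' of this theorem is the single word ``Similarly'' --- it instantiates the same template as the WSO--WS case (interaction steps $I_g$, the sequence dual $\mathcal{D}(\zeta)$, configuration images, and the compatibility/composability definitions imported from \cite{ActorCustandCompo}), and the theorem is then read off as essentially definitional. So the core of your argument matches the paper's route. Where you diverge is your third paragraph: you treat the faithfulness of the \emph{virtual} WSO--WSO interaction to the real chain WSO--WS, WS--WS, WS--WSO as ``the hard part'' and propose a transitivity lemma for compatibility to discharge it. The paper does not do this; it defines the WSO--WSO interaction rules directly on the operator $\langle\_\rangle(\_)$ and only remarks, without proof, that the result ``can also be gotten from'' the WSO--WS and WS--WS compositionality theorems. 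Your lemma is therefore extra content --- arguably it identifies and fills a gap the paper leaves open (nothing in the paper certifies that collapsing the two relay hops yields the same boundary traces as the directly-defined WSO--WSO rules), but it is not needed to match the paper's own, purely template-instantiating argument. One small slip: you refer to ``six rules'' but list the five that the paper actually gives (\textbf{silent}, two \textbf{emit}, two \textbf{consume}).
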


Note that the compositionality between WSO and WSO can also be gotten from the the compositionality between WSO and WS, and the compositionality between WS and WS.

\subsection{Semantics for Buying Books Example}

Semantics for Buying Books example includes semantics for UserAgentWSO and its AAs, semantics for BookStoreWSO and its AAs, semantics for UserAgentWS, semantics for BookStoreWS, semantics for BuyingBookWSC, interaction semantics between UserAgentWSO and UserAgentWS, interaction semantics between UserAgentWS and BookStoreWS, interaction semantics between BookStoreWS and BookStoreWSO and interactions between UserAgentWSO and BookStoreWSO.

The detailed semantics for Buying Books Example is following. We also pay attentions to the UserAgent side, that is, semantics for UserAgentWSO and its AAs, semantics for UserAgentWS, semantics for BuyingBookWSC, interaction semantics between UserAgentWSO and UserAgentWS, interaction semantics between UserAgentWS and BookStoreWS, and interactions between UserAgentWSO and BookStoreWSO are as following.

Semantics for UserAgentWSO and its AAs:

[\textbf{in}]
\\

$\langle F\rangle\rightarrow \langle F, a\lhd v\rangle$

\quad\quad\quad\quad if $F=[F', A]\lceil\{R\}$ and

\quad\quad\quad\quad $A=p(a,b,UserAgentWSO,ws|\sigma:[s]\lambda:[l]\tau:[t])$ and $a=UserAgentWSO$ and $a\in recep(F)$.
\\

[\textbf{out}]
\\

$\langle [F,a:a'\lhd v]\rangle\lceil\{R\}\rightarrow \langle [F]\lceil\{R\cup (acq(v)\cap recep(F))\}\rangle$

\quad\quad\quad\quad if $a\notin recep(F)$.
\\

[\textbf{create-RequstLBAA}]
\\

$A\rightarrow [A, A', WSO\blacktriangleleft(a',\textbf{ready},\{\})]\lceil\{a\}$

\quad\quad\quad\quad if $A=?(a,b_a,wso_a,ws_a|\sigma:[s_a]\lambda:[l_a]\tau:[t_a])$ and

\quad\quad\quad\quad $A'=?(a',b_{a'},wso_{a'},ws_{a'}|\sigma:[s_{a'}]\lambda:[\textbf{ready}]\tau:[WSO])$ and

\quad\quad\quad\quad $En_{ex}(a,s_a)$ holds and

\quad\quad\quad\quad $a'\notin acq(s)\cup\{a\}$

\quad\quad\quad\quad and $a=wso_a=wso_{a'}=UserAgentWSO$ and $ws_a=ws_{a'}$ and $a'=RequstLBAA$.
\\

Similarly to \textbf{create-RequstLBAA} rule, there are also \textbf{create-ReceiveLBAA} rule, \textbf{create-SendSBAA} rule, \textbf{create-ReceivePBAA} rule, and \textbf{create-PayBAA} rule.

Semantics for UserAgentWS:

[\textbf{create-UserAgentWSO}]
\\

$A\rightarrow [A^*, A', a\blacktriangleleft(a',\textbf{ready},\{\})]\lceil\{a\}$

\quad\quad\quad\quad if $A=?(a,b_a,[\quad],ws_a|\sigma:[s_a]\lambda:[l_a]\tau:[t_a])$ and

\quad\quad\quad\quad $A^*=?(a,b_a,a',ws_a|\sigma:[s_a]\lambda:[l_a]\tau:[t_a])$ and

\quad\quad\quad\quad $A'=?(a',b_{a'},a',a|\sigma:[s_{a'}]\lambda:[\textbf{ready}]\tau:[a])$ and

\quad\quad\quad\quad $En_{ex}(a,s_a)$ holds and

\quad\quad\quad\quad $a'\notin acq(s)\cup\{a\}$ and $a=UserAgentWS$ and $a'=UserAgentWSO$.
\\

[\textbf{in}]
\\

$\langle F\rangle\rightarrow \langle F, a\lhd v\rangle$

\quad\quad\quad\quad if $F=[F', A]\lceil\{R\}$ and

\quad\quad\quad\quad $A=p(a,b,UserAgentWSO,UserAgentWS|\sigma:[s]\lambda:[l]\tau:[t])$ and $a\in recep(F)$.
\\

[\textbf{out}]
\\

$\langle [F,a:a'\lhd v]\rangle\lceil\{R\}\rightarrow \langle [F]\lceil\{R\cup (acq(v)\cap recep(F))\}\rangle$

\quad\quad\quad\quad if $a\notin recep(F)$.
\\

Semantics for BuyingBookWSC:

[\textbf{create-WSs}]
\\

$A\rightarrow [A^*, A', A'']\lceil\{a\}$

\quad\quad\quad\quad if $A=?(a,b_a,[\quad],[\quad]|\sigma:[s_a]\lambda:[l_a]\tau:[t_a])$ and

\quad\quad\quad\quad $A^*=?(a,b_a,a',a''|\sigma:[s_a]\lambda:[l_a]\tau:[t_a])$ and

\quad\quad\quad\quad $A'=?(a',b_{a'},[\quad],a''|\sigma:[s_{a'}]\lambda:[l_{a'}]\tau:[t_{a'}])$ and

\quad\quad\quad\quad $A''=?(a'',b_{a''},[\quad],a'|\sigma:[s_{a''}]\lambda:[l_{a''}]\tau:[t_{a''}])$ and

\quad\quad\quad\quad $En_{ex}(a,s_a)$ holds and

\quad\quad\quad\quad $a'\notin acq(s)\cup acq(s_{a''})\cup\{a\}$ and $a''\notin acq(s)\cup acq(s_{a'})\cup\{a\}$ and $a'=UserAgentWS$ and $a''=BookStoreWS$.
\\

Interaction semantics between UserAgentWSO and UserAgentWS:

(1) At UserAgentWSO side, $\textbf{wso-ws-emit-2(UserAgentWS,UserAgentWSO,requestLB)}$, then $\textbf{ws-wso-consume-2(UserAgentWSO,UserAgentWS,requestLB)}$ at UserAgentWS side.

(2) At UserAgentWS side, $\textbf{ws-wso-emit-2(UserAgentWSO,UserAgentWS,receiveLB)}$, then $\textbf{wso-ws-consume-2(UserAgentWS,UserAgentWSO,receiveLB)}$ at UserAgentWSO side.

(3) At UserAgentWSO side, $\textbf{wso-ws-emit-2(UserAgentWS,UserAgentWSO,sendSB)}$, then $\textbf{ws-wso-consume-2(UserAgentWSO,UserAgentWS,sendSB)}$ at UserAgentWS side.

(4) At UserAgentWS side, $\textbf{ws-wso-emit-2(UserAgentWSO,UserAgentWS,receivePB)}$, then $\textbf{wso-ws-consume-2(UserAgentWS,UserAgentWSO,receivePB)}$ at UserAgentWSO side.

(5) At UserAgentWSO side, $\textbf{wso-ws-emit-2(UserAgentWS,UserAgentWSO,payB)}$, then $\textbf{ws-wso-consume-2(UserAgentWSO,UserAgentWS,payB)}$ at UserAgentWS side.
\\

We can see that the partial configurations of UserAgentWSO and UserAgentWS are composable.

Interaction semantics between UserAgentWS and BookStoreWS:

(1) At UserAgentWS side, $\textbf{ws-ws-emit-2(UserAgentWS,BookStoreWS,requestLB)}$, then $\textbf{ws-ws-consume-2(BookStoreWS,UserAgentWS,requestLB)}$ at BookStoreWS side.

(2) At BookStoreWS side, $\textbf{ws-ws-emit-2(BookStoreWS,UserAgentWS,receiveLB)}$, then $\textbf{ws-ws-consume-2(UserAgentWS,BookStoreWS,receiveLB)}$ at UserAgentWS side.

(3) At UserAgentWS side, $\textbf{ws-ws-emit-2(UserAgentWS,BookStoreWS,sendSB)}$, then $\textbf{ws-ws-consume-2(BookStoreWS,UserAgentWS,sendSB)}$ at BookStoreWS side.

(4) At BookStoreWS side, $\textbf{ws-ws-emit-2(UserAgentWSO,UserAgentWS,receivePB)}$, then $\textbf{ws-ws-consume-2(UserAgentWS,BookStoreWS,receivePB)}$ at UserAgentWS side.

(5) At UserAgentWS side, $\textbf{ws-ws-emit-2(UserAgentWS,UserAgentWSO,payB)}$, then $\textbf{ws-ws-consume-2(BookStoreWS,UserAgentWS,payB)}$ at BookStoreWS side.
\\

The partial configurations of UserAgentWS and BookStoreWS are composable.

Interactions between UserAgentWSO and BookStoreWSO:

(1) At UserAgentWSO side, $\textbf{wso-wso-emit-2(BookStoreWSO,UserAgentWSO,requestLB)}$, then $\textbf{wso-wso-consume-2(UserAgentWSO,BookStoreWSO,requestLB)}$ at BookStoreWSO side.

(2) At BookStoreWSO side, $\textbf{wso-wso-emit-2(UserAgentWSO,BookStoreWSO,receiveLB)}$, then $\textbf{wso-wso-consume-2(BookStoreWSO,UserAgentWSO,receiveLB)}$ at UserAgentWSO side.

(3) At UserAgentWSO side, $\textbf{wso-wso-emit-2(BookStoreWSO,UserAgentWSO,sendSB)}$, then $\textbf{wso-wso-consume-2(UserAgentWSO,BookStoreWSO,sendSB)}$ at BookStoreWSO side.

(4) At BookStoreWSO side, $\textbf{wso-wso-emit-2(UserAgentWSO,BookStoreWSO,receivePB)}$, then $\textbf{wso-wso-consume-2(BookStoreWSO,UserAgentWSO,receivePB)}$ at UserAgentWSO side.

(5) At UserAgentWSO side, $\textbf{wso-wso-emit-2(BookStoreWSO,UserAgentWSO,payB)}$, then $\textbf{wso-wso-consume-2(UserAgentWSO,BookStoreWSO,payB)}$ at BookStoreWSO side.

We can also see that the partial configurations of UserAgentWSO and BookStoreWSO are composable.

\section{Conclusions and Future Works}

We aim at the requirements of WS Composition and design an architecture of WS Composition runtime, which establishes a natural relationship between WSO and WSC. We introduce an actor-based language called AB-WSCL to support such an architecture. Finally, we give AB-WSCL a formal semantics to make it based on a firmly theoretic foundation.

Actually, AB-WSCL, its implementations and its formal semantics can be used widely. One main usage is in design time, as a modeling language, a simulation tool and a validator. The other is used as a real WS Composition runtime.

In this section, we firstly map ingredients of AB-WSCL to those of XML-based WS specifications to show the practicability of AB-WSCL. Then we conclude the advantages of AB-WSCL and point out the future works.

\subsection{Mappings between AB-WSCL and XML-Based WS Specifications}

The contents of AB-WSCL mainly involve three WS specifications, including WSDL\cite{WSDL}, WS-BPEL\cite{WS-BPEL} and WS-CDL\cite{WS-CDL}.

Mappings between AB-WSCL and WSDL are as Table \ref{Table.1} shows, AB-WSCL and WS-BPEL are as Table \ref{Table.2} illustrates, AB-WSCL and WS-CDL are as Table \ref{Table.3} illustrates.

\begin{center}
\begin{table}
  \begin{tabular}{@{}ll@{}}
   \hline
   WSDL\hspace{1cm}  & AB-WSCL \\
   \hline
   %\ifamsfontsloaded
   Web Service                 & WS Actor \\
   Message                     & local variable and WS-method args \\
   Operation                   & WS-method \\
   Operation Input             & WS-method args \\
   %\fi
   Operation Output            & sending messages by method call-back \\
   Message Exchange Pattern    & message deliver synchronously and asynchronously \\
   Bindings                    & actor addresses \\
   Services                    & transport with SOAP support \\
   \hline
  \end{tabular}
  \caption{Mappings between AB-WSCL and WSDL}
  \label{Table.1}
\end{table}
\end{center}

\begin{center}
\begin{table}

  \begin{tabular}{@{}ll@{}}
   \hline
   WS-BPEL\hspace{1cm}  & AB-WSCL \\
   \hline
   %\ifamsfontsloaded
   Process                          & WSO actor \\
   Atomic Activity                  & AA actor \\
   Sequence Composition             & causalities of messages sending among actors \\
   Parallel Composition             & parallel is intrinsical \\
   Choice Composition               & condition of method \\
   %\fi
   Loop composition                 & condition of method \\
   Partner Links                    & through a relation of the partner WS actor and the WSC actor \\
   Variables                        & local states and local variables \\
   \hline
  \end{tabular}
  \caption{Mappings between AB-WSCL and WS-BPEL}
  \label{Table.2}
\end{table}
\end{center}

\begin{center}
\begin{table}
  \begin{tabular}{@{}ll@{}}
   \hline
   WS-CDL\hspace{1cm}  & AB-WSCL \\
   \hline
   %\ifamsfontsloaded
   Choreography               & WSC actor \\
   Participant                & role definition \\
   Variables                  & local states and local variables \\
   Interaction                & linking two partner WS actors \\
   %\fi
   Expressions                & local variable assignments \\
   \hline
  \end{tabular}
  \caption{Mappings between AB-WSCL and WS-CDL}
  \label{Table.3}
\end{table}
\end{center}

\subsection{Advantages of AB-WSCL}

Through the above examinations of AB-WSCL, we can conclude that it has several advantages:

Firstly, AB-WSCL is an actor-based language that provides a new approach to unifying WSO and WSC. Though it is not the first work to unifying WSO and WSC, the actor-based approach provides more natural relationships among WSO, WS, and WSC under the environment of cross-organizational business integration.

Secondly, based on rewriting semantics for actors, we give AB-WSCL a strictly formal semantics between WSO and its interface WS, WS and its partner WS, WSO and its partner WSO.

Thirdly, AB-WSCL is quite simple but powerful. One aspect is that it can serve as a basis of several tools for WS Composition, such as modeling tools, simulation tools, verification tools and even WS Composition runtime systems. The other aspect is that we can easily translate from AB-WSCL into WS specifications or vice versa, and these translations make AB-WSCL a more useful work, but not only a work on paper.

Finally, the components of AB-WSCL are all actors, such as AAs, WSO actors, WSC actors and WS actors. This actor-constructed system can capture the concurrent nature of WS and WS Composition architecture because of the intrinsic concurrency that actors have.

\subsection{Future Works}\label{Future Works}

I believe that, In the future, explorations is needed of the enrichments of AB-WSCL, including providing pre-defined AA types to solve the requirements of WS Composition or even those of some domains of business process integrations, providing more states of AAs, WSO, WS, WSC for requirements of monitoring and controlling, etc.

A further aspect is about development of WS Composition system based on works of this paper, including modeling tools, simulation tools, verification tools, runtime systems, and also translating tools between AB-WSCL and WS specifications.

\newpage

\appendix
\section{XML-Based Web Service Specifications for Buying Books Example}\label{XMLDescription}

In Fig.\ref{Fig.5}, the user agent business process being modeled as UserAgent WSO described by WS-BPEL is described in following.

-------------------------------------------------------------------------------

$\langle$process name="UserAgent"

\quad targetNamespace="http://example.wscs.com/2011/ws-bp/useragent"

\quad xmlns="http://docs.oasis-open.org/wsbpel/2.0/process/executable"

\quad xmlns:lns="http://example.wscs.com/2011/wsdl/UserAgent.wsdl"

\quad xmlns:bns="http://example.wscs.com/2011/wsdl/BookStore.wsdl"$\rangle$

\quad $\langle$documentation xml:lang="EN"$\rangle$

\quad\quad This document describes the UserAgent process.

\quad $\langle$/documentation$\rangle$

\quad $\langle$partnerLinks$\rangle$

\quad\quad $\langle$partnerLink name="UserAndUserAgent"

\quad\quad\quad partnerLinkType="lns:UserAnduserAgentLT" myRole="userAgent"/$\rangle$

\quad\quad $\langle$partnerLink name="UserAgentAndBookStore"

\quad\quad\quad partnerLinkType="lns:UserAgentAndBookStoreLT"

\quad\quad\quad myRole="user" partnerRole="seller"/$\rangle$

\quad $\langle$/partnerLinks$\rangle$

\quad $\langle$variables$\rangle$

\quad\quad $\langle$variable name="RequestListofBooks" messageType="lns:requestListofBooks"/$\rangle$

\quad\quad $\langle$variable name="RequestListofBooksResponse" messageType="lns:requestListofBooksResponse"/$\rangle$

\quad\quad $\langle$variable name="ReceiveListofBooks" messageType="lns:receiveListofBooks"/$\rangle$

\quad\quad $\langle$variable name="ReceiveListofBooksResponse" messageType="lns:receiveListofBooksResponse"/$\rangle$

\quad\quad $\langle$variable name="SelectListofBooks"  messageType="lns:selectListofBooks"/$\rangle$

\quad\quad $\langle$variable name="SelectListofBooksResponse"  messageType="lns:selectListofBooksResponse"/$\rangle$

\quad\quad $\langle$variable name="ReceivePrice" messageType="lns:receivePrice"/$\rangle$

\quad\quad $\langle$variable name="ReceivePriceResponse" messageType="lns:receivePriceResponse"/$\rangle$

\quad\quad $\langle$variable name="Pays" messageType="lns:pays"/$\rangle$

\quad\quad $\langle$variable name="PaysResponse" messageType="lns:paysResponse"/$\rangle$

\quad $\langle$/variables$\rangle$

\quad $\langle$sequence$\rangle$

\quad\quad $\langle$receive partnerLink="UserAndUserAgent"

\quad\quad\quad portType="lns:userAgent4userInterface"

\quad\quad\quad operation="opRequestListofBooks" variable="RequestListofBooks"

\quad\quad\quad createInstance="yes"$\rangle$

\quad\quad $\langle$/receive$\rangle$

\quad\quad $\langle$invoke partnerLink="UserAgentAndBookStore"

\quad\quad\quad portType="bns:bookStore4userAgentInterface"

\quad\quad\quad operation="opRequestListofBooks" inputVariable="RequestListofBooks"

\quad\quad\quad outputVariable="RequestListofBooksResponse"$\rangle$

\quad\quad $\langle$/invoke$\rangle$

\quad\quad $\langle$receive partnerLink="UserAgentAndBookStore"

\quad\quad\quad portType="lns:userAgent4BookStoreInterface"

\quad\quad\quad operation="opReceiveListofBooks" variable="ReceiveListofBooks"$\rangle$

\quad\quad $\langle$/receive$\rangle$

\quad\quad $\langle$reply partnerLink="UserAgentAndBookStore"

\quad\quad\quad portType="lns:userAgent4BookStoreInterface"

\quad\quad\quad operation="opReceiveListofBooks" variable="ReceiveListofBooksResponse"$\rangle$

\quad\quad $\langle$/reply$\rangle$

\quad\quad $\langle$!--send the received book list to the user--$\rangle$

\quad\quad $\langle$receive partnerLink="UserAndUserAgent"

\quad\quad\quad portType="lns:userAgent4userInterface"

\quad\quad\quad operation="opSelectListofBooks" variable="SelectListofBooks"$\rangle$

\quad\quad $\langle$/receive$\rangle$

\quad\quad $\langle$reply partnerLink="UserAndUserAgent"

\quad\quad\quad portType="lns:userAgent4userInterface"

\quad\quad\quad operation="opSelectListofBooks" variable="SelectListofBooksResponse"$\rangle$

\quad\quad $\langle$/reply$\rangle$

\quad\quad $\langle$invoke partnerLink="UserAgentAndBookStore"

\quad\quad\quad portType="bns:bookStore4userAgentInterface"

\quad\quad\quad operation="opSelectListofBooks" inputVariable="SelectListofBooks"

\quad\quad\quad outputVariable="SelectListofBooksResponse"$\rangle$

\quad\quad $\langle$/invoke$\rangle$

\quad\quad $\langle$receive partnerLink="UserAgentAndBookStore"

\quad\quad\quad portType="lns:userAgent4BookStoreInterface"

\quad\quad\quad operation="opReceivePrice" variable="ReceivePrice"$\rangle$

\quad\quad $\langle$/receive$\rangle$

\quad\quad $\langle$reply partnerLink="UserAgentAndBookStore"

\quad\quad\quad portType="lns:userAgent4BookStoreInterface"

\quad\quad\quad operation="opReceivePrice" variable="ReceivePriceResponse"$\rangle$

\quad\quad $\langle$/reply$\rangle$

\quad\quad $\langle$!--send the price to the user and get pays from the user--$\rangle$

\quad\quad $\langle$invoke partnerLink="UserAgentAndBookStore"

\quad\quad\quad portType="bns:bookStore4userAgentInterface"

\quad\quad\quad operation="opPays" inputVariable="Pays" outputVariable="PaysResponse"$\rangle$

\quad\quad $\langle$/invoke$\rangle$

\quad\quad $\langle$reply partnerLink="UserAndUserAgent"

\quad\quad\quad portType="lns:userAgent4userInterface"

\quad\quad\quad operation="opRequestListofBooks" variable="PaysResponse"$\rangle$

\quad\quad $\langle$/reply$\rangle$

\quad $\langle$/sequence$\rangle$

$\langle$/process$\rangle$

-------------------------------------------------------------------------------

The interface WS for UserAgent WSO being called UserAgent WS described by WSDL is as following.

-------------------------------------------------------------------------------

$\langle$?xml version="1.0" encoding="utf-8"?$\rangle$

$\langle$description

\quad xmlns="http://www.w3.org/2004/08/wsdl"

\quad targetNamespace= "http://example.wscs.com/2011/wsdl/UserAgent.wsdl"

\quad\quad\quad xmlns:plnk="http://docs.oasis-open.org/wsbpel/2.0/plnktype"

\quad xmlns:tns= "http://example.wscs.com/2011/wsdl/UserAgent.wsdl"

\quad xmlns:ghns = "http://example.wscs.com/2011/schemas/UserAgent.xsd"

\quad xmlns:bsns = "http://example.wscs.com/2011/wsdl/BookStore.wsdl"

\quad xmlns:wsoap= "http://www.w3.org/2004/08/wsdl/soap12"

\quad xmlns:soap="http://www.w3.org/2003/05/soap-envelope"$\rangle$

\quad $\langle$documentation$\rangle$

\quad \quad This document describes the userAgent Web service.

\quad $\langle$/documentation$\rangle$

\quad$\langle$types$\rangle$

\quad\quad$\langle$xs:schema

\quad\quad\quad xmlns:xs="http://www.w3.org/2001/XMLSchema"

\quad\quad\quad targetNamespace="http://example.wscs.com/2011/schemas/UserAgent.xsd"

\quad\quad\quad xmlns="http://example.wscs.com/2011/schemas/UserAgent.xsd"$\rangle$

\quad\quad\quad $\langle$xs:element name="requestListofBooks" type="tRequestListofBooks"/$\rangle$

\quad\quad\quad $\langle$xs:complexType name="tRequestListofBooks"/$\rangle$

\quad\quad\quad $\langle$xs:element name="requestListofBooksReponse"

\quad\quad\quad\quad type="tRequestListofBooksResponse"/$\rangle$

\quad\quad\quad $\langle$xs:complexType name="tRequestListofBooksResponse"/$\rangle$

\quad\quad\quad $\langle$xs:element name="receiveListofBooks" type="tReceiveListofBooks"/$\rangle$

\quad\quad\quad $\langle$xs:complexType name="tReceiveListofBooks"/$\rangle$

\quad\quad\quad $\langle$xs:element name="receiveListofBooksResponse"

\quad\quad\quad\quad type="tReceiveListofBooksResponse"/$\rangle$

\quad\quad\quad $\langle$xs:complexType name="tReceiveListofBooksResponse"/$\rangle$

\quad\quad\quad $\langle$xs:element name="selectListofBooks" type="tSelectListofBooks"/$\rangle$

\quad\quad\quad $\langle$xs:complexType name="tSelectListofBooks"/$\rangle$

\quad\quad\quad $\langle$xs:element name="selectListofBooksResponse"

\quad\quad\quad\quad type="tSelectListofBooksResponse"/$\rangle$

\quad\quad\quad $\langle$xs:complexType name="tSelectListofBooksResponse"/$\rangle$

\quad\quad\quad $\langle$xs:element name="receivePrice" type="xs:float"/$\rangle$

\quad\quad\quad $\langle$xs:element name="receivePriceResponse" type="tReceivePriceResponse"/$\rangle$

\quad\quad\quad $\langle$xs:complexType name="tReceivePriceResponse"/$\rangle$

\quad\quad\quad $\langle$xs:element name="pays" type="tPays"/$\rangle$

\quad\quad\quad $\langle$xs:complexType name="tPays"/$\rangle$

\quad\quad\quad $\langle$xs:element name="paysResponse" type="tPaysResponse"/$\rangle$

\quad\quad\quad $\langle$xs:complexType name="tPaysResponse"/$\rangle$

\quad\quad $\langle$/xs:schema$\rangle$

\quad $\langle$/types$\rangle$

\quad $\langle$interface name = "UserAgent4UserInterface"$\rangle$

\quad\quad$\langle$operation name="opRequestListofBooks"$\rangle$

\quad\quad\quad $\langle$input messageLabel="InOpRequestListofBooks"

\quad\quad\quad\quad element="ghns:requestListofBooks" /$\rangle$

\quad\quad\quad $\langle$output messageLabel="OutOpRequestListofBooks"

\quad\quad\quad\quad element="ghns:requestListofBooksReponse" /$\rangle$

\quad\quad $\langle$/operation$\rangle$

\quad\quad $\langle$operation name="opSelectListofBooks"$\rangle$

\quad\quad\quad $\langle$input messageLabel="InOpSelectListofBooks"

\quad\quad\quad\quad element="ghns:selectListofBooks" /$\rangle$

\quad\quad\quad $\langle$output messageLabel="OutOpSelectListofBooks"

\quad\quad\quad\quad element="ghns:selectListofBooksResponse" /$\rangle$

\quad\quad $\langle$/operation$\rangle$

\quad $\langle$/interface$\rangle$

\quad $\langle$interface name = "UserAgent4BookStoreInterface"$\rangle$

\quad\quad $\langle$operation name="opReceiveListofBooks"$\rangle$

\quad\quad\quad $\langle$input messageLabel="InOpReceiveListofBooks"

\quad\quad\quad\quad element="ghns:receiveListofBooks" /$\rangle$

\quad\quad\quad $\langle$output messageLabel="OutOpReceiveListofBooks"

\quad\quad\quad\quad element="ghns:receiveListofBooksResponse" /$\rangle$

\quad\quad $\langle$/operation$\rangle$

\quad\quad $\langle$operation name="opReceivePrice"$\rangle$

\quad\quad\quad $\langle$input messageLabel="InOpReceivePrice"

\quad\quad\quad\quad element="ghns:receivePrice" /$\rangle$

\quad\quad\quad $\langle$output messageLabel="OutOpReceivePrice"

\quad\quad\quad\quad element="ghns:receivePriceResponse" /$\rangle$

\quad\quad $\langle$/operation$\rangle$

\quad $\langle$/interface$\rangle$

\quad $\langle$plnk:partnerLinkType name="UserAndUserAgentLT"$\rangle$

\quad\quad $\langle$plnk:role name="UserAgent"

\quad\quad\quad portType="tns:UserAgent4UserInterface" /$\rangle$

\quad $\langle$/plnk:partnerLinkType$\rangle$

\quad $\langle$plnk:partnerLinkType name="UserAgentAndBookStoreLT"$\rangle$

\quad\quad $\langle$plnk:role name="user"

\quad\quad\quad portType="tns:UserAgent4BookStoreInterface" /$\rangle$

\quad\quad $\langle$plnk:role name="seller"

\quad\quad\quad portType="bsns:BookStore4UserAgentInterface" /$\rangle$

\quad $\langle$/plnk:partnerLinkType$\rangle$

$\langle$/description$\rangle$

-------------------------------------------------------------------------------

In the buying books example, the WSC between user agent and bookstore (exactly UserAgentWS and BookStoreWS) called BuyingBookWSC being described by WS-CDL is following.

-------------------------------------------------------------------------------

$\langle$?xml version="1.0" encoding="UTF-8"?$\rangle$

$\langle$package xmlns="http://www.w3.org/2005/10/cdl"

\quad xmlns:cdl="http://www.w3.org/2005/10/cdl"

\quad xmlns:xsi="http://www.w3.org/2001/XMLSchema-instance"

\quad xmlns:xsd="http://www.w3.org/2001/XMLSchema"

\quad xmlns:bans="http://example.wscs.com/2011/wsdl/UserAgent.wsdl"

\quad xmlns:bsns="http://example.wscs.com/2011/wsdl/BookStore.wsdl"

\quad xmlns:tns="http://example.wscs.com/2011/cdl/BuyingBookWSC"

\quad targetNamespace="http://example.wscs.com/2011/cdl/BuyingBookWSC"

\quad name="BuyingBookWSC"

\quad version="1.0"$\rangle$

\quad $\langle$informationType name="requestListofBooksType" type="bsns:tRequestListofBooks"/$\rangle$

\quad $\langle$informationType name="requestListofBooksResponseType"

\quad\quad type="bsns:tRequestListofBooksResponse"/$\rangle$

\quad $\langle$informationType name="listofBooksType" type="bsns:tListofBooks"/$\rangle$

\quad $\langle$informationType name="listofBooksResponseType"

\quad\quad type="bsns:tListofBooksResponse"/$\rangle$

\quad $\langle$informationType name="selectListofBooksType"

\quad\quad type="bsns:tSelectListofBooks"/$\rangle$

\quad $\langle$informationType name="selectListofBooksResponseType"

\quad\quad type="bsns:tSelectListofBooksResponse"/$\rangle$

\quad $\langle$informationType name="priceType" type="bsns:tPrice"/$\rangle$

\quad $\langle$informationType name="priceResponseType" type="bsns:tPriceResponse"/$\rangle$

\quad $\langle$informationType name="paysType" type="bsns:tPays"/$\rangle$

\quad $\langle$informationType name="paysResponseType" type="bsns:tPaysResponse"/$\rangle$

\quad $\langle$roleType name="UserAgent"$\rangle$

\quad\quad $\langle$behavior name="UserAgent4BookStore" interface="bans:BuyAgent4BookStoreInterface"/$\rangle$

\quad $\langle$/roleType$\rangle$

\quad $\langle$roleType name="BookStore"$\rangle$

\quad\quad $\langle$behavior name="BookStore4userAgent" interface="rns:BookStore4userAgentInterface"/$\rangle$

\quad $\langle$/roleType$\rangle$

\quad $\langle$relationshipType name="UserAgentAndBookStoreRelationship"$\rangle$

\quad\quad $\langle$roleType typeRef="tns:user" behavior="UserAgent4BookStore"/$\rangle$

\quad\quad $\langle$roleType typeRef="tns:seller" behavior="BookStore4userAgent"/$\rangle$

\quad $\langle$/relationshipType$\rangle$

\quad $\langle$choreography name="BuyingBookWSC"$\rangle$

\quad\quad $\langle$relationship type="tns:UserAgentAndBookStoreRelationship"/$\rangle$

\quad\quad $\langle$variableDefinitions$\rangle$

\quad\quad\quad $\langle$variable name="requestListofBooks" informationType="tns:requestListofBooksType"/$\rangle$

\quad\quad\quad $\langle$variable name="requestListofBooksResponse"

\quad\quad\quad\quad informationType="tns:requestListofBooksResponseType"/$\rangle$

\quad\quad\quad $\langle$variable name="listofBooks" informationType="tns:listofBooksType"/$\rangle$

\quad\quad\quad $\langle$variable name="listofBooksResponse" informationType="tns:listofBooksResponseType"/$\rangle$

\quad\quad\quad $\langle$variable name="selectListofBooks" informationType="tns:selectListofBooksType"/$\rangle$

\quad\quad\quad $\langle$variable name="selectListofBooksResponse"

\quad\quad\quad\quad informationType="tns:selectListofBooksResponseType"/$\rangle$

\quad\quad\quad $\langle$variable name="price" informationType="tns:priceType"/$\rangle$

\quad\quad\quad $\langle$variable name="priceResponse" informationType="tns:priceResponseType"/$\rangle$

\quad\quad\quad $\langle$variable name="pays" informationType="tns:paysType"/$\rangle$

\quad\quad\quad $\langle$variable name="paysResponse" informationType="tns:paysResponseType"/$\rangle$

\quad\quad $\langle$/variableDefinitions$\rangle$

\quad\quad $\langle$sequence$\rangle$

\quad\quad\quad $\langle$interaction name="InteractionBetweenBAandBS1"$\rangle$

\quad\quad\quad\quad $\langle$participate relationshipType="tns:UserAgentAndBookStoreRelationship"

\quad\quad\quad\quad\quad fromRoleTypeRef="tns:user" toRoleTypeRef="tns:seller"/$\rangle$

\quad\quad\quad\quad $\langle$exchange name="requestListofBooks"

\quad\quad\quad\quad\quad informationType="tns:requestListofBooksType" action="request"$\rangle$

\quad\quad\quad\quad\quad $\langle$send variable="cdl:getVariable('tns:requestListofBooks','','')"/$\rangle$

\quad\quad\quad\quad\quad $\langle$receive variable="cdl:getVariable('tns:requestListofBooks','','')"/$\rangle$

\quad\quad\quad\quad $\langle$/exchange$\rangle$

\quad\quad\quad\quad $\langle$exchange name="requestListofBooksResponse"

\quad\quad\quad\quad\quad informationType="requestListofBooksResponseType" action="respond"$\rangle$

\quad\quad\quad\quad\quad $\langle$send variable="cdl:getVariable('tns:requestListofBooksResponse','','')"/$\rangle$

\quad\quad\quad\quad\quad $\langle$receive variable="cdl:getVariable('tns:requestListofBooksResponse','','')"/$\rangle$

\quad\quad\quad\quad $\langle$/exchange$\rangle$

\quad\quad\quad $\langle$/interaction$\rangle$

\quad\quad\quad $\langle$interaction name="InteractionBetweenBAandBS2"$\rangle$

\quad\quad\quad\quad $\langle$participate relationshipType="tns:UserAgentAndBookStoreRelationship"

\quad\quad\quad\quad\quad fromRoleTypeRef="tns:seller" toRoleTypeRef="tns:user"/$\rangle$

\quad\quad\quad\quad $\langle$exchange name="sendListofBooks"

\quad\quad\quad\quad\quad informationType="tns:listofBooksType" action="request"$\rangle$

\quad\quad\quad\quad\quad $\langle$send variable="cdl:getVariable('tns:listofBooks','','')"/$\rangle$

\quad\quad\quad\quad\quad $\langle$receive variable="cdl:getVariable('tns:listofBooks','','')"/$\rangle$

\quad\quad\quad\quad $\langle$/exchange$\rangle$

\quad\quad\quad\quad $\langle$exchange name="sendListofBooksResponse"

\quad\quad\quad\quad\quad informationType="listofBooksResponseType" action="respond"$\rangle$

\quad\quad\quad\quad\quad $\langle$send variable="cdl:getVariable('tns:listofBooksResponse','','')"/$\rangle$

\quad\quad\quad\quad\quad $\langle$receive variable="cdl:getVariable('tns:listofBooksResponse','','')"/$\rangle$

\quad\quad\quad\quad $\langle$/exchange$\rangle$

\quad\quad\quad $\langle$/interaction$\rangle$

\quad\quad\quad $\langle$interaction name="InteractionBetweenBAandBS3"$\rangle$

\quad\quad\quad\quad $\langle$participate relationshipType="tns:UserAgentAndBookStoreRelationship"

\quad\quad\quad\quad\quad fromRoleTypeRef="tns:user" toRoleTypeRef="tns:seller"/$\rangle$

\quad\quad\quad\quad $\langle$exchange name="selectListofBooks"

\quad\quad\quad\quad\quad informationType="tns:selectListofBooksType" action="request"$\rangle$

\quad\quad\quad\quad\quad $\langle$send variable="cdl:getVariable('tns:selectListofBooks','','')"/$\rangle$

\quad\quad\quad\quad\quad $\langle$receive variable="cdl:getVariable('tns:selectListofBooks','','')"/$\rangle$

\quad\quad\quad\quad $\langle$/exchange$\rangle$

\quad\quad\quad\quad $\langle$exchange name="selectListofBooksResponse"

\quad\quad\quad\quad\quad informationType="selectListofBooksResponseType" action="respond"$\rangle$

\quad\quad\quad\quad\quad $\langle$send variable="cdl:getVariable('tns:selectListofBooksResponse','','')"/$\rangle$

\quad\quad\quad\quad\quad $\langle$receive variable="cdl:getVariable('tns:selectListofBooksResponse','','')"/$\rangle$

\quad\quad\quad\quad $\langle$/exchange$\rangle$

\quad\quad\quad $\langle$/interaction$\rangle$

\quad\quad\quad $\langle$interaction name="InteractionBetweenBAandBS4"$\rangle$

\quad\quad\quad\quad $\langle$participate relationshipType="tns:UserAgentAndBookStoreRelationship"

\quad\quad\quad\quad\quad fromRoleTypeRef="tns:seller" toRoleTypeRef="tns:user"/$\rangle$

\quad\quad\quad\quad $\langle$exchange name="sendPrice"

\quad\quad\quad\quad\quad informationType="tns:priceType" action="request"$\rangle$

\quad\quad\quad\quad\quad $\langle$send variable="cdl:getVariable('tns:price','','')"/$\rangle$

\quad\quad\quad\quad\quad $\langle$receive variable="cdl:getVariable('tns:price','','')"/$\rangle$

\quad\quad\quad\quad $\langle$/exchange$\rangle$

\quad\quad\quad\quad $\langle$exchange name="sendPriceResponse"

\quad\quad\quad\quad\quad informationType="priceResponseType" action="respond"$\rangle$

\quad\quad\quad\quad\quad $\langle$send variable="cdl:getVariable('tns:priceResponse','','')"/$\rangle$

\quad\quad\quad\quad\quad $\langle$receive variable="cdl:getVariable('tns:priceResponse','','')"/$\rangle$

\quad\quad\quad\quad $\langle$/exchange$\rangle$

\quad\quad\quad $\langle$/interaction$\rangle$

\quad\quad\quad $\langle$interaction name="InteractionBetweenBAandBS5"$\rangle$

\quad\quad\quad\quad $\langle$participate relationshipType="tns:UserAgentAndBookStoreRelationship"

\quad\quad\quad\quad\quad fromRoleTypeRef="tns:user" toRoleTypeRef="tns:seller"/$\rangle$

\quad\quad\quad\quad $\langle$exchange name="pays"

\quad\quad\quad\quad\quad informationType="tns:paysType" action="request"$\rangle$

\quad\quad\quad\quad\quad $\langle$send variable="cdl:getVariable('tns:pays','','')"/$\rangle$

\quad\quad\quad\quad\quad $\langle$receive variable="cdl:getVariable('tns:pays','','')"/$\rangle$

\quad\quad\quad\quad $\langle$/exchange$\rangle$

\quad\quad\quad\quad $\langle$exchange name="paysResponse"

\quad\quad\quad\quad\quad informationType="paysResponseType" action="respond"$\rangle$

\quad\quad\quad\quad\quad $\langle$send variable="cdl:getVariable('tns:paysResponse','','')"/$\rangle$

\quad\quad\quad\quad\quad $\langle$receive variable="cdl:getVariable('tns:paysResponse','','')"/$\rangle$

\quad\quad\quad\quad $\langle$/exchange$\rangle$

\quad\quad\quad $\langle$/interaction$\rangle$

\quad\quad $\langle$/sequence$\rangle$

\quad $\langle$/choreography$\rangle$

$\langle$/package$\rangle$

-------------------------------------------------------------------------------

\label{lastpage}

\end{document}